\documentclass[aps,pra,notitlepage,onecolumn,nofootinbib,superscriptaddress,longbibliography,10pt]{revtex4-2}

\usepackage{amsmath,amsfonts,amssymb,array,graphicx,mathtools,multirow,bm,times,tcolorbox,relsize,booktabs}
\usepackage[utf8]{inputenc}
\usepackage[T1]{fontenc}
\usepackage{qcircuit}

\usepackage{mathrsfs}
\usepackage[ruled, vlined, linesnumbered]{algorithm2e}

\usepackage{hyperref}
\hypersetup{colorlinks=true,citecolor=blue,linkcolor=blue,filecolor=blue,urlcolor=blue,breaklinks=true}

\usepackage[marginal]{footmisc}
\usepackage{url}
\usepackage{theorem}

\usepackage{orcidlink}

\newtheorem{definition}{Definition}
\newtheorem{proposition}{Proposition}
\newtheorem{lemma}[proposition]{Lemma}

\newtheorem{theorem}[proposition]{Theorem}

\newtheorem{assumption}{Assumption}
\newtheorem{remark}{Remark}

\newenvironment{proof}{\noindent \textbf{{Proof~} }}{\hfill $\blacksquare$}

\allowdisplaybreaks

\usepackage{cleveref}
\usepackage{graphicx}
\usepackage{xcolor}

\RequirePackage[framemethod=default]{mdframed}
\newmdenv[skipabove=7pt,
skipbelow=7pt,
backgroundcolor=darkblue!15,
innerleftmargin=5pt,
innerrightmargin=5pt,
innertopmargin=5pt,
leftmargin=0cm,
rightmargin=0cm,
innerbottommargin=5pt,
linewidth=1pt]{tBox}
\definecolor{darkblue}{RGB}{0,76,156}

\newmdenv[skipabove=7pt,
skipbelow=7pt,
backgroundcolor=darkred!15,
innerleftmargin=5pt,
innerrightmargin=5pt,
innertopmargin=5pt,
leftmargin=0cm,
rightmargin=0cm,
innerbottommargin=5pt,
linewidth=1pt]{rBox}
\definecolor{darkred}{RGB}{195,0,0}
\definecolor{colortwo}{rgb}{0.4,0.77,0.17}
\definecolor{colorthree}{rgb}{0.01,0.51,0.93}

\DeclareMathOperator{\sgn}{sign}
\newcommand{\rmd}[0]{{\mathrm{d}}}
\newcommand{\iie}[0]{\textit{i.e.}}
\newcommand{\diag}[0]{{\rm diag}}
\newcommand{\para}[0]{\mathbin{\!/\mkern-5mu/\!}}

\begin{document}

\title{Adaptive controllable architecture of analog Ising machine}

\author{Langyu Li~\orcidlink{0009-0001-0781-5160}}\email{langyuli@zju.edu.cn}
\affiliation{College of Control Science and Engineering, Zhejiang University, Hangzhou 310027, China}

\author{Ruoyu Wu}
\affiliation{College of Information Science and Electronic Engineering, Zhejiang University, Hangzhou 310027, China}

\author{Yong Wang}
\affiliation{Hangzhou Innovation Institute, Beihang University, Hangzhou 310051, China}

\author{Guofeng Zhang}%
\affiliation{Department of Applied Mathematics, The Hong Kong Polytechnic University, Hong Kong, China}

\author{Jinhu L{\"u}}
\thanks{Corresponding authors:
\href{mailto:lvjinhu@buaa.edu.cn}{lvjinhu@buaa.edu.cn};
\href{mailto:gaoqing@buaa.edu.cn}{gaoqing@buaa.edu.cn};
\href{mailto:ypan@zju.edu.cn}{ypan@zju.edu.cn}}
\affiliation{School of Automation Science and Electrical Engineering, Beihang University, Beijing 100191, China}
\affiliation{Hangzhou Innovation Institute, Beihang University, Hangzhou 310051, China}

\author{Qing Gao}
\thanks{Corresponding authors:
\href{mailto:lvjinhu@buaa.edu.cn}{lvjinhu@buaa.edu.cn};
\href{mailto:gaoqing@buaa.edu.cn}{gaoqing@buaa.edu.cn};
\href{mailto:ypan@zju.edu.cn}{ypan@zju.edu.cn}}
\affiliation{School of Automation Science and Electrical Engineering, Beihang University, Beijing 100191, China}
\affiliation{Hangzhou Innovation Institute, Beihang University, Hangzhou 310051, China}

\author{Yu Pan}
\thanks{Corresponding authors:
\href{mailto:lvjinhu@buaa.edu.cn}{lvjinhu@buaa.edu.cn};
\href{mailto:gaoqing@buaa.edu.cn}{gaoqing@buaa.edu.cn};
\href{mailto:ypan@zju.edu.cn}{ypan@zju.edu.cn}}
\affiliation{College of Control Science and Engineering, Zhejiang University, Hangzhou 310027, China}

\begin{abstract}
As a quantum-inspired, non-traditional analog solver architecture, the analog Ising machine (AIM) has emerged as a distinctive computational paradigm to address the rapidly growing demand for computational power. However, the mathematical understanding of its principles, as well as the optimization of its solution speed and accuracy, remain unclear. In this work, we for the first time systematically discuss multiple implementations of AIM and establish a unified mathematical formulation. On this basis, by treating the binarization constraint of AIM (such as injection locking) as a Lagrange multiplier in optimization theory and combining it with a Lyapunov analysis from dynamical systems theory, an analytical framework for evaluating solution speed and accuracy is constructed, and further demonstrate that conventional AIMs possess a theoretical performance upper bound. Subsequently, by elevating the binarization constraint to a control variable, we propose the controllable analog Ising machine (CAIM), which integrates control Lyapunov functions and momentum-based optimization algorithms to realize adaptive sampling–feedback control, thereby surpassing the performance limits of conventional AIMs. In a proof-of-concept CAIM demonstration implemented using an FPGA-controlled LC-oscillator Ising machine, CAIM achieves a $2\times$ speedup and a 7\% improvement in accuracy over AIM on a 50-node all-to-all weighted MaxCut problem, validating both the effectiveness and interpretability of the proposed theoretical framework.
\end{abstract}

\date{\today}
\maketitle

\section{Introduction}

Due to the increasing demand for both general-purpose and specialized computational power, Ising machines, as dedicated solvers for quadratic unconstrained binary optimization (QUBO) problems, have made remarkable progress in recent years. Since QUBO problems are NP-complete in computational complexity theory, their complexity grows exponentially. For general-purpose computing, brute-force methods that search through the QUBO solution space typically require exponential time and computational resources. In contrast, Ising machines, designed specifically for this purpose, can overcome the algorithmic complexity limitations of the von Neumann architecture. They achieve analog solutions to QUBO problems based on different physical principles and have already demonstrated experimental advantages on several benchmark tasks \cite{johnson2011quantum,venturelli2015quantum,hamerly2019experimental,Chowdhury2023Full}. With the rapid advancement of Ising machines, many real-world combinatorial optimization problems are now modeled as QUBO instances and efficiently solved using these machines. Gradually, this has formed a complete workflow. Ising machines have proven effective across various fields, including finance, machine learning, and biomedicine, and continue to expand in both problem scope and scale—becoming an indispensable computational paradigm in specialized computing \cite{Lucas2014Ising,mcmahon2018solve,hussain2020optimal,bao2021multi,parizy2022cardinality,weinberg2023supply}.

Among various types of Ising machines, analog Ising machines (AIMs) have recently attracted widespread attention as an important branch. Ising machines can be realized through multiple physical implementations and solving principles. Alongside AIMs, there are quantum annealers (QAs) based on superconducting qubits \cite{johnson2011quantum,venturelli2019reverse,inoue2021traffic} and quantum annealing, coherent Ising machines (CIMs) built upon photonics \cite{wang2013coherent,inagaki2016coherent,bohm2018understanding}, and digital Ising machines (DIMs) focusing on hardware-level and parallel architectures \cite{yamaoka201524,yamaoka201520k,yamamoto2020statica,aadit2022massively,Wang2025Parallel,Li2025Fast}. AIMs, leveraging mature CMOS technology, have lower environmental requirements compared to QAs and CIMs, making them more suitable for near-term deployment and practical applications. Although DIMs also use CMOS technology, they rely on digital computation and primarily emphasize hardware acceleration. Consequently, their performance depends heavily on algorithmic design. AIMs, on the other hand, perform analog computation via Lyapunov processes, granting them both theoretical and implementation advantages, making AIMs a crucial branch in the Ising machine landscape.

Contribution to improve and understand the architecture and solving principles of AIMs have become a focal point, as shown in Fig. \ref{fig: architecture}a. Early AIMs were implemented using electronic oscillators, where inductor-capacitor (LC) oscillators are coupled to form a Kuramoto model with sub-harmonic injection locking (SHIL)\cite{wang2017sub,wang2017oscillator,wang2019oim,chou2019analog,Albertsson2023Highly}. This synchronization mechanism allows each oscillator to exhibit two stable states, corresponding exactly to the up and down spins of the Ising model. The system dynamics, i.e., the synchronization process of oscillators, can be expressed by the Adler equation with SHIL, which can be proven to follow a Lyapunov process \cite{adler1946study,acebron2005kuramoto}. Therefore, the system naturally evolves toward the lowest energy state, corresponding to the Ising model’s ground state. Subsequent studies such as bistable resistively-coupled Ising machine (BRIM) \cite{Afoakwa2021BRIM,Zhang2022CMOS} and ring oscillator-based AIMs (ROSC) \cite{lo2023ising,Yin2024Ferroelectric,Qian2025Device,Kumar2025DROID} introduced special bistable elements to achieve more controllable coupling and smoother convergence. Meanwhile, many works have incorporated machine learning and other optimization techniques to enhance AIM architectures \cite{bohm2022noise}. Other studies examined the solving principles from different perspectives, including the impact of SHIL strength, noise, and perturbations on system dynamics \cite{Erementchouk2022computational,Bashar2023Stability}. Inspired by this, several AIM variants have introduced stochastic or deterministic perturbations, such as those based on simulated annealing or Langevin dynamics, to improve solution accuracy and convergence speed \cite{chou2019analog,Liu2025Integrated,Wu2025Enhancing}.

However, despite the various realizations of AIMs today, their Lyapunov function derivative is not strictly negative and often becomes zero during evolution. Since current AIMs are typically autonomous systems, the derivative being zero leads the system to premature stagnation or local minima, rather than the true global minimum. Moreover, whether the lowest energy state strictly corresponds to the ground state of the Ising model remains underexplored. As a result, there is still no unified theoretical framework addressing the reachability of system dynamics and the correctness of solutions. Furthermore, because autonomous AIMs cannot be externally controlled during solving, their achievable precision is inherently limited by this lack of controllability. These issues are crucial to the further development of AIMs, motivating this work to focus on providing theoretical insights and practical solutions.

\begin{figure}
    \centering
    \includegraphics[width=\textwidth]{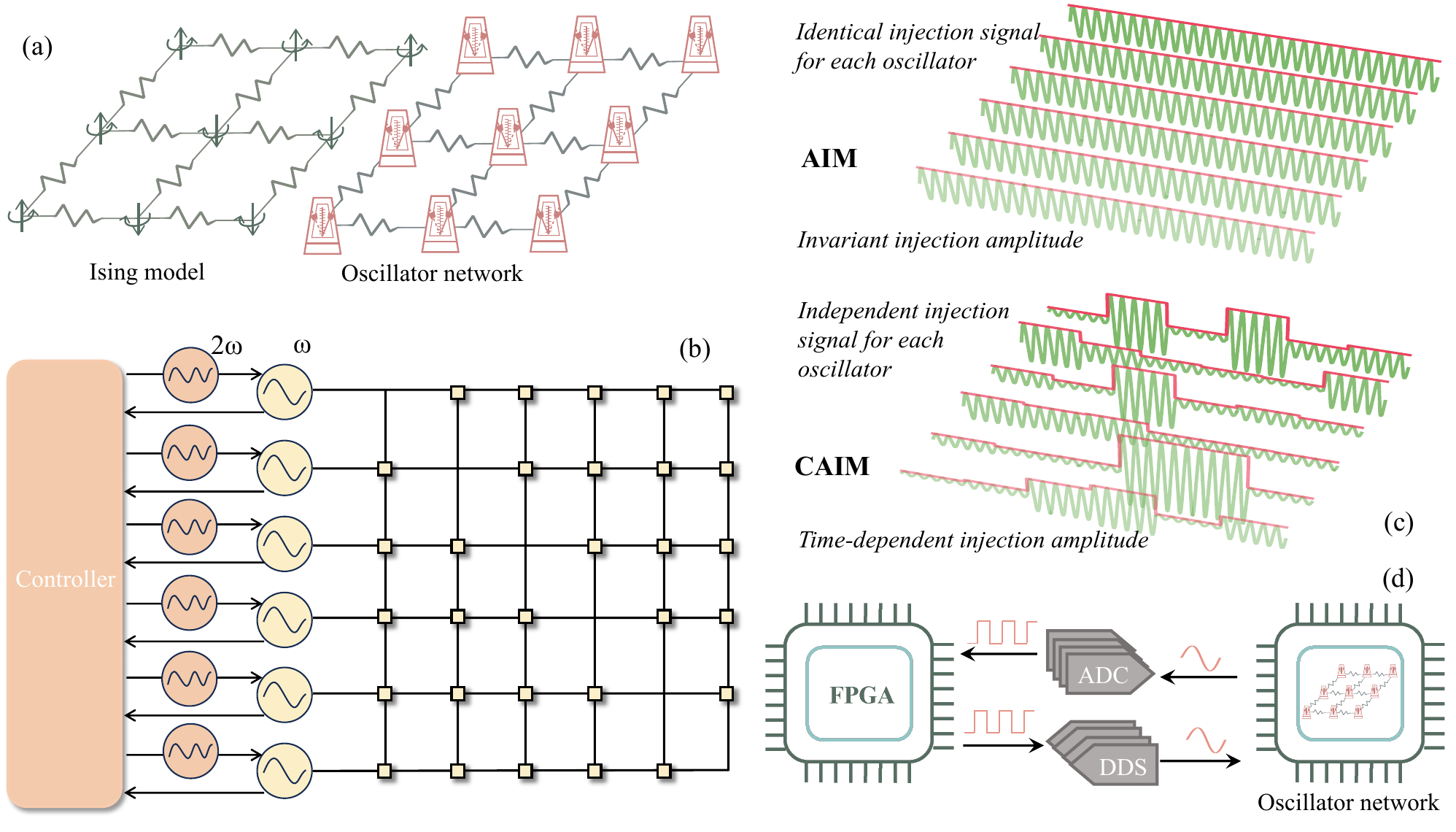}
    \caption{Proposed CAIM architecture and implementations. (a) A lattice-shaped Ising model and the corresponding oscillator network, where each spin in the Ising model has only two discrete states, while the states of the oscillator network are typically continuous and are usually represented by $0$ and $\pi$ as the two extrema. (b) The CAIM architecture. Compared with a conventional AIM, an additional controller is introduced and independent injection signals can be applied to each oscillator. The controller samples the oscillator states and computes feedback injection signals, thereby enabling precise regulation of the oscillator states, intervening in the intrinsic dynamical evolution of a conventional AIM, and consequently achieving faster and more accurate solution processes. (c) In a conventional AIM, identical injection signals with fixed strength are applied to all oscillators, whereas in a CAIM, different injection signals are applied to different oscillators and their injection strengths are adjusted in real time. (d) A proof-of-concept physical implementation framework of a COIM is presented, in which the controller is realized using an FPGA to perform signal sampling, computation, and feedback.}\label{fig: architecture}
\end{figure}

In this work, from the perspective of dynamical systems, we abstract previously proposed AIMs and establish a correspondence between oscillator bistability and the constraints in optimization theory. Based on this abstraction, we propose a systematic and unified theoretical analysis framework that can characterize the relationship between AIM dynamics and optimization performance, such as solution accuracy. Under this framework, we theoretically predict that traditional AIMs cannot simultaneously guarantee both solution correctness and reachability, which explains certain performance limitations observed in practice.

Building upon this theoretical understanding, we propose a controllable analog Ising machine (CAIM) architecture by introducing controller into AIMs, as shown in Fig. \ref{fig: architecture}bc. Specifically, SHIL strength is used as the control variable, and a digital partial controller is employed to achieve controllability. Furthermore, by integrating control Lyapunov functions with a momentum-based adaptive controller, we design an asynchronous adaptive control injection algorithm, which enhances both the reachability and correctness of solutions—thereby breaking previous performance limits while accelerating convergence.

Experimentally, we implement CAIM on a 51-node fully connected FPGA–OIM oscillator architecture, as shown in Fig. \ref{fig: architecture}d, achieving an average $2\times$ speedup and 7\% success probability improvement on benchmark problems such as weighted Max-Cut (MaxCuts) and Spin models (SpinModels). Additionally, through numerical simulations on benchmark problems, we analyze the effects of asynchrony, initial conditions, noise, hyperparameters, and AIM types on CAIM performance, validating its accuracy, speed improvement, and general applicability.

\section{Results}\label{sec: results}
\subsection{Formulation of AIM}
The Ising model was originally proposed to explain the phenomenon of ferromagnetic phase transition \cite{ising1925contribution}. The Hamiltonian of $n$-spin Ising model is defined as
\begin{align}
    H(\bm{s}) =&\sum_{i,j} J_{ij} s_i s_j + \sum_{i} h_i s_i \\
    =&\bm{s}^\top \bm{J} \bm{s} + \bm{h}^\top \bm{s},   \quad \bm{s}\in \{-1 ,+1 \}^n, \label{eq: ising hamiltonian}
\end{align}
where $\bm{J}:= [J_{ij}]_{n\times n}, \bm{h} :=[ h_1 , \cdots ,h_n ]^\top,\bm{s}:=[ s_1 , \cdots ,s_n ]^\top$, and each spin $s_i = \pm 1$ represents the two possible states (up and down) of a spin in the Ising model.
The coupling coefficients $J_{ij} \in \mathbb{R}$, satisfying $J_{ij} = J_{ji}$ and $J_{ii} = 0$, describe the interaction strength between spins $i$ and $j$.
The term $h_i \in \mathbb{R}$ denotes the external magnetic field applied to spin $i$.
If $\bm{J}$ is regarded as the pairwise interaction among spins, then $\bm{h}$ can be viewed as the individual bias for each spin.

The problem of finding the ground-state energy of the Ising model, known as the Ising problem, is equivalent to the QUBO formulation. Its objective is to find a spin configuration $\mathbf{s}^*$ that minimizes the Ising Hamiltonian, i.e.,
\begin{align}
    \bm{s}^* = \arg \min_{\bm{s}} \quad \bm{s}^\top \bm{J} \bm{s} + \bm{h}^\top \bm{s}, 
\end{align}
where $\mathbf{s}^*$ is also referred to as the optimal solution. Sometimes, the set of optimal solutions may contain more than one vector, but the Ising problem generally does not pay much attention to finding all of them. Therefore, we use $\bm{s}^*$ to denote one certain of them.

Inspired by the most representative LC oscillator-based Ising machine (OIM) in AIM, we attempt to generalize the concept and definition of AIM. The general model of the AIM is determined by 5 elements: the representation energy function $K:\Omega \rightarrow \mathbb{R}  $, the binarization energy function (or called regularization term) $R:\Omega \rightarrow \mathbb{R}$, the decision function $F:\Omega \rightarrow \mathscr{S}$, the weight of regularization $\mu \in \mathbb{R}^+$, and the dynamics $ \frac{\rmd \psi}{\rmd t}$, where $\psi:= [\phi_1,\cdots,\phi_n]^\top \in \Omega \subseteq \mathbb{R}^n$ is the state vector describing the each oscillator state of AIM during synchronizing, and $ \Omega$ denotes the state space; $\mathscr{S}:=\{ -1,+1\}^n$. Obviously, the state vector is time-dependent, \iie, $\psi (t)= [\phi_1 (t),\cdots,\phi_n (t)]^\top$, and the total energy function $E(t)$ of the system can be expressed as 
\begin{align}
 E(\psi(t)) &= K(\psi(t)) + \mu R(\psi(t)).   
\end{align}
The representation function $K(\psi)$ depends on the physical implementation of $\bm{J}$ and $\bm{h}$, realizing an approximation of the Ising Hamiltonian through the emulation of spin interactions. The binarization function $R(\psi)$ can be regarded as a regularization term or constraint, whose minimum typically enforces two stable states for each component of the state vector, often symmetric and corresponding to the two spin states in the Ising model. The system dynamics determine the trajectory of the state evolution, while the decision function maps the state vector to a solution of the Ising problem, corresponding to the readout process of the final AIM state.

To illustrate this more intuitively, two typical examples help clarify the mechanism. For the OIM, the representation function is $K_{\mathrm{OIM}}(\psi) = \sum_{i,j} J_{ij} \cos(\phi_i - \phi_j) + \sum_i h_i \cos(\phi_i),$
and its binarization term is $R_{\mathrm{OIM}}(\psi) = -\sum_i \cos(2\phi_i)$,
which enforces each oscillator to have two stable phase states, $0$ and $\pi$, corresponding to the lowest energy configuration. Referred to the implementation, $K(\psi)$ is obtained by the Kuramoto model of LC oscillators, and  $R(\psi)$ is implemented by SHIL, where $\mu$ is corresponding to the injection amplitude of SHIL. Under the condition that $R_{\mathrm{OIM}}(\psi)$ is minimized, $K_{\mathrm{OIM}}(\psi)$ becomes equivalent to the Ising Hamiltonian, since $\cos(\phi_i - \phi_j) = \cos(\phi_i)\cos(\phi_j) = s_i s_j$. Similarly, for the BRIM, the binarization term is $    R_{\mathrm{BRIM}}(\psi) \approx \sum_i (\phi_i^2 -1 )^2$. It ensures that, at its minimum, the representation function $ K_{\mathrm{BRIM}}(\psi) = \sum_{i,j} J_{ij} \phi_i \phi_j + \sum_i h_i \phi_i$, becomes equivalent to the Ising Hamiltonian as well. Unlike OIM, whose SHIL strength can be adjusted independently, BRIM has a fixed $\mu$, which is integrated into its oscillators. The dynamics of both OIM and BRIM follow
\begin{align}
\frac{\rmd \phi_i}{\rmd t} = -\frac{\partial E}{\partial \phi_i},~ (\text{or written as  } \dot{\psi} = -\nabla E(\psi) ,)
\end{align}
which guarantees $E(\psi(t))$ a Lyapunov process due to $\dot{E}(\psi(t)) = \langle \nabla E(\psi) , \dot{\psi} \rangle = -\Vert \nabla E(\psi)\Vert_{\ell_2}^2\leq 0$, driving the total system energy toward its minimum; $\langle \bm{a}, \bm{b} \rangle = \bm{a}^\top \bm{b}$ denotes the inner product, and $\nabla$ is the gradient operator; $\Vert \cdot\Vert_{\ell_2}$ denotes $\ell_2$-norm. Additionally, both models employ a decision function based on the sign operation $F_{\mathrm{OIM}}(\psi) = [\cdots, \operatorname{sign}(\cos(\phi_i)), \cdots]^\top, F_{\mathrm{BRIM}}(\psi) = [\cdots, \operatorname{sign}(\phi_i), \cdots]^\top$.

In simple terms, Lyapunov dynamics in AIM allow the system to minimize its energy while satisfying the regularization constraints, producing two stable equilibrium states.
These two states evolve, under energy minimization, toward the configuration that minimizes the representation energy function, which corresponds precisely to the Ising Hamiltonian, thereby solving the Ising problem.
Hence, we can summarize two essential conditions for an AIM to correctly solve the Ising problem: 
\begin{enumerate}
    \item The system energy must satisfy the Lyapunov condition (energy minimization).
    \begin{align}
    \dot{E}(\psi(t)) \leq 0,
    \end{align}
    \item When the binarization energy reaches its minimum, the representation energy equals the Ising Hamiltonian
\begin{align} \label{eq: simulation condition}
    K(\psi ) = H (F(\psi)), \quad \forall \psi \in \{ \psi :  R(\psi) =  \min  R(\psi)\}.  
\end{align}
\end{enumerate}

More discussion about this formulation can be found in the SI \ref{si-sec: formulation}. Generally, $R,F$ are always related to sign function. Therefore, for mathematical convenience, some small assumptions are made on the properties of $K,R,F$, which are always satisfied in the most existing AIMs. Generally, $K(\psi), R(\psi)$ is differentiable. $R$ is sign-independent for every element, \iie, $ R( [\cdots, \phi_i,\cdots  ]) =  R( [\cdots, -\phi_i,\cdots  ])$. $F$ is element-wise operator, \iie, $F([\phi_1 ,\phi_2 ,\cdots]^\top) =  [ F(\phi_1 ),F(\phi_2)\cdots]^\top$.

\subsection{Correctness of the AIM}

However, the system energy satisfies $\dot{E}(\psi(t)) \leq 0$ but is not strictly less than zero. Moreover, since the system is evidently autonomous, i.e., $\frac{\partial E(\psi, t)}{\partial t} = 0$, it implies that the system may get trapped at a fixed point (a local minimum), where the system state ceases to evolve, even though it is not the global minimum of the energy function. Therefore, the system may not always be able to find the correct solution, which requires a discussion on the actual performance of AIM.

The correctness of the solution, also referred to as the accuracy, is one of the most important performance metrics. It describes how close the Ising Hamiltonian corresponding to the system's minimum-energy state $\psi_* := \arg \min E(\psi)$ is to the true ground-state energy $H_0:=  \min H(\bm{s})$. Here, the concept of energy levels $H_0,H_1,\cdots$ is introduced, as for arbitrary $\bm{s}$, $H(\bm{s}) \in \{ H_0, H_1, \cdots \}, H_0< H_1<\cdots$. Based on the energy levels, the set of spin configurations associated with the $i$-th energy level can be defined as $\mathcal{S}_i:= \{ \bm{s} : H(\bm{s}) = H_i) \}, |\mathcal{S}_i| \geq 1$. Furthermore, the inverse of the decision function is defined as $F^{-1}(\bm{s}) := \{ \psi : F(\psi) = \bm{s} \}, F^{-1}(\mathcal{S}) := \{ \psi :  F(\psi) = \bm{s}, \; \bm{s} \in \mathcal{S} \}$.

In the ideal case, the solving principle of AIM can be formulated as a primal optimization problem
\begin{align}
    \arg \min_{\psi}& \quad K(\psi), \label{eq: primal problem 1}\\
    \text{subject to}& \quad R(\psi) = \min  R(\psi). \label{eq: primal problem 2}
\end{align}
Evidently, this primal problem is equivalent to the Ising problem according to Eq. \eqref{eq: simulation condition}, except that their variables lie in different state spaces, namely $\Omega$ and $\mathscr{S}$, respectively. 
By applying the method of Lagrange multipliers, an augmented Lagrangian function can be obtained
\begin{align}
    K(\psi) + \mu (R(\psi) -  R_{\min} ).
\end{align}
After discarding constant terms $- \mu R_{\min}$, the remaining part corresponds to the system energy of AIM, whose decreasing process reflects the optimization process itself. 
It is apparent that the augmented Lagrangian function represents a relaxed optimization problem. To ensure equivalence to the original optimization problem, the penalty coefficient $\mu$ must be sufficiently large, or even $\mu \to \infty$. This also implies that, in practical AIM implementations, the obtained solution may not necessarily coincide with the exact solution of the primal optimization problem. 

Therefore, it becomes necessary to introduce the definition of the equivalent solution. We say the relaxed problem has the equivalent solution as the primal problem when satisfying
\begin{align}\label{eq: equivalent solution}
    F(\psi_*) \in \mathcal{S}_0.
\end{align}
The purpose of this definition is to evaluate whether the theoretically minimal-energy state obtained by AIM actually corresponds to the ground state of the intended Ising problem. 
Since the constraint imposed by $R(\psi)$ is relaxed, the corresponding spin configuration space $\bm{s}$ may be expanded from $\mathcal{S}$ to $\mathbb{R}^n$. Within this enlarged space, a mismatch between the obtained solution and the true Ising ground state may occur. Therefore, a necessary and sufficient condition for holding equivalent solution can be formulated
\begin{align}
   \min_{\psi \in F^{-1}~ (\mathcal{S}_0)} E( \psi) \leq \min_{\psi  \in F^{-1} (\mathscr{S}/\mathcal{S}_0)}~ E( \psi),
\end{align}
which requires that at least one spin configuration belonging to $\mathcal{S}_0$ corresponds to the state with the minimal system energy. Further discussions on the definition of the equivalent solutions and their necessary and sufficient conditions can be found in the SI \ref{si-sec: correctness}.

This necessary and sufficient condition remains valid for various complex AIM architectures. However, in order to build a more intuitive connection among different AIM elements, several mathematical assumptions are introduced based on this condition. Since Eq. \eqref{eq: simulation condition} describes the situation in which the representation energy coincides with the Ising Hamiltonian under the given constraint, and practical AIMs evolve continuously, we propose a stronger assumption: suppose that the strong constraint imposed by $\mu R(\psi)$, $\mu$ larger than a certain value $\bar\mu$, can restrict the minimum-energy state within the neighborhood $\mathscr{U}_{\bm{s}}$ of its extremum $\psi_{\bm{s}} :=\arg \min_{\psi \in F^{-1} (\bm{s}) } R(\psi)$, \iie,
\begin{align}
   \psi_\diamond \in \mathscr{U}_{\bm{s}},
\end{align}
 for $\forall \mu \geq \bar{\mu},\psi_\diamond := \arg \min_{\psi \in F^{-1}(\bm{s})} E(\psi)$. Moreover, $R(\psi)$ is treated as a metric function that measures the degree of approximation between the representation energy and the Ising Hamiltonian, expressed by their absolute energy difference. We further assume their consistency, that is, a smaller $R(\psi)$ indicates a closer correspondence between the representation energy and the Ising Hamiltonian, \iie,
\begin{align}
     \Big| K(\psi_1)-H(\bm{s})\Big|\leq \Big| K(\psi_2)-H(\bm{s})\Big|  ,
\end{align}
for $\psi_1,\psi_2 \in \mathscr{U}_{\bm{s}},R(\psi_1) \leq R(\psi_2)$. A detailed justification of this assumption can be found in the SI \ref{si-sec: correctness}.

Under this assumption, we can assert that there exists a sufficiently large $\widehat{\mu}$ that satisfies the equivalent-solution condition. Furthermore, for all $\mu > \widehat{\mu}$, the equivalent-solution condition continues to hold. 
By analogy with the relationship between the annealing time $\mathcal{O}(\frac{1}{(H_1-H_0)^2})$ in QA and the minimum energy gap $H_1- H_0$ \cite{farhi2000quantum}, the parameter $\widehat\mu$ is also inversely proportional to the energy gap as 
\begin{align}
    \widehat\mu\sim \mathcal{O}\Big(\frac{1}{H_1 - H_0}\Big).
\end{align}
Proofs can be found in the SI \ref{si-sec: correctness}.

This conclusion indicates that the existence of equivalent solutions depends on the magnitude of $\mu$. Specifically, this dependency arises because the neighborhood around the extremum of $R(\psi)$ introduces a relaxation effect. Although $R(\psi)$ attains its extremum, $K(\psi)$ may not necessarily do so—it could instead correspond to a saddle point or a trivial stationary point. If the gradient of $K(\psi)$ is sufficiently large, the system state vector can move toward a lower-energy point by slightly relaxing the constraint imposed by $R(\psi)$. However, a larger $\mu$ effectively suppresses this tendency by penalizing such deviations, thereby stabilizing the system near $\bm{s}$ and preserving the equivalence of the solution. This can be explained in more detail by Fig.\ref{fig: theoretical}a. Hence, $\mu$ exhibits a positive correlation with the correctness of the solution.

Regarding the interpretation of the energy gap, errors in the obtained solution occur when $K(\psi)$ fluctuates within a tolerable range defined by $R(\psi)$. In such cases, the system energy tends to converge near the first excited state. Therefore, a larger energy gap facilitates clearer separation between the ground and excited states, allowing for smaller values of $\mu$. Conversely, when the Ising problem has a very small energy gap, these two levels become difficult to distinguish, and even slight relaxations may cause the system to converge to the excited state. Hence, $\mu$ must be sufficiently large to confine the system energy within a narrow range and prevent incorrect solutions.

\subsection{Reachability and practical performance }\label{sec: reachability}

However, when discussing the practical performance of AIM, a larger $\mu$ does not necessarily lead to better results. The above conclusions are based on the assumption that the AIM dynamics can naturally evolve toward the global minimum-energy state. In practice, however, such an ideal condition is almost impossible to achieve. In dynamical systems, a fixed point is defined as a state $\psi$ such that $\dot{\psi}=0$, while a critical point of the energy function $E(\psi)$ is defined by $\nabla E(\psi)=0$. Once the state arrives at a fixed point, the state will no longer change. Since AIMs are typically autonomous systems with dynamics of the form $\dot{\psi} = -\nabla E(\psi)$, the fixed points and critical points coincide. Therefore, the dynamical behavior of the system is closely tied to the energy function $E$, and the analysis of $E$ provides predictive insights into system performance.

Because $R(\psi)$ serves as a binarization function, for each $\bm{s}$, it typically possesses at least one minimum point, which is also an extremum whose Hessian matrix is positive semi-definite. For example, in the case of $R_{\mathrm{OIM}}(\psi)= -\sum_i \cos(2\phi_i)$, these minima correspond to the stable states $\psi = [\cdots, k_i \pi,\cdots]^\top, k_i \in \mathbb{Z}$; For $R_{\mathrm{BRIM}}(\psi)=\sum_i (\phi_i^2 -1 )^2$, its stable states can be $\psi = [\pm 1, \pm 1,\cdots]^\top$. Due to the convexity and smoothness around these points, they form regions that act as local minima, as shown in Fig. \ref{fig: theoretical}b. When minimizing the total energy $E(\psi)$, \iie, following the direction of $-\nabla R(\psi)$, the positive semi-definite Hessian of $R(\psi)$ attracts the state toward these extremum.

For larger $\mu$, the attraction gets stronger, making states trapped in a low-energy domain. For example, even if two states $\psi \in F^{-1} (\bm{s})$ and $\psi^\prime  \in F^{-1} (\bm{s}^\prime)$ have the same value of $\mu R(\psi) =\mu R(\psi^\prime)$, and $K(\psi^\prime) < K(\psi)$, the transition from $\psi$ to $\psi^\prime$ may require passing through an intermediate state $\psi^{\prime \prime}$ with a slightly larger $ R(\psi^{\prime \prime}) =R(\psi)+\epsilon $, but a sufficiently large $\mu$, $\mu \epsilon > K(\psi) - K(\psi^{\prime \prime})$. In this case, the transition path is prohibited by the dynamics $\dot{E}(\psi(t)) \leq 0$, due to a too large $\mu R(\psi^{\prime \prime})$ term, which acts as an energy barrier, making the energetic cost of this transition too high. As previously analyzed, AIMs cannot always realistically evolve to the global minimum-energy state.

In particular, when $\mu R(\psi) \gg K(\psi)$, the system preferentially evolves along the direction of $-\nabla R$, leading to premature convergence to a local minimum and stagnation. Moreover, these local minima are typically exponentially numerous. For instance, ignoring the periodicity of OIM and BRIM, there is at least one local minimum for each $\bm{s}$. As $\mu \to \infty$, the total energy satisfies $E (\psi)\approx\mu R(\psi)$. Evolving along the direction where $\dot{E}(\psi(t)) \leq 0$, one can predict that for each $\bm{s}$, there exists an invariant set $\mathcal{I}_{\bm{s}} \subseteq F^{-1}(\bm{s})$ around $\psi_{\bm{s}}$.The states in  $\mathcal{I}_{\bm{s}} $ satisfies local Lyapunov stability, due to $\dot{E}(\psi(t)) \approx \mu \dot{R}(\psi(t))  < 0$. This causes the system’s final state $\psi_\infty$ to converge near a local minimum $\psi_{\bm{s}}$. For a trivial decision function $F$, such as a sign-like function, all points within $\mathcal{I}_{\bm{s}}$ are mapped to the same value, which makes the AIM’s solving process meaningless, because the final solution $F(\psi_\infty)$ becomes completely equivalent to the spin configuration of the initial state.

Define the probability that the AIM can randomly select an initial state that evolves to the global minimum-energy state as the reachability $P$. Thus, as $\mu \to \infty$, leading to the invariant set filling almost its whole part of state space, \iie, $\mathcal{I}_{\bm{s}} \approx F^{-1}(\bm{s})$, the reachability $P$ decreases exponentially such that
\begin{align}
    \mu\rightarrow \infty, \quad P\sim \mathcal{O}\Big(\frac{1}{2^n}\Big),
\end{align}
meaning each initial state converges directly to its own local minimum. Only when the system happens to start in $F^{-1}(\bm{s}^*)$ does it reach the true ground state, but this probability approaches zero as $n$ increases, making it extremely unlikely to find the global minimum.

When evaluating the practical performance of AIM, that is, how close the final Ising Hamiltonian energy is to the true ground-state energy, both correctness and reachability must be considered simultaneously. Both are strongly dependent on the regularization weight $\mu$, but in opposite ways: correctness tends to increase with larger $\mu$, while reachability decreases. A small $\mu$ makes the AIM solve a relaxed version of the Ising problem, which may result in incorrect solutions. Conversely, a large $\mu$ improves correctness but causes the AIM to prematurely converge to one of exponentially many local minima without sufficiently minimizing $K(\psi)$. Therefore, the practical performance of AIM may be upper-bounded, making it difficult to achieve both correctness and reachability simultaneously, as shown in Fig. \ref{fig: theoretical}c. Of course, there exist fortunate cases where, for certain $\mu$ and initial state $\psi_0$, the final state $\psi_\infty$ happens to correspond to $\bm{s}^*$ even though neither the equivalence condition nor the reachability condition is strictly satisfied. Such coincidences, however, are generally rare and are not considered in this work.

\begin{figure}
    \centering
    \includegraphics[width=\textwidth]{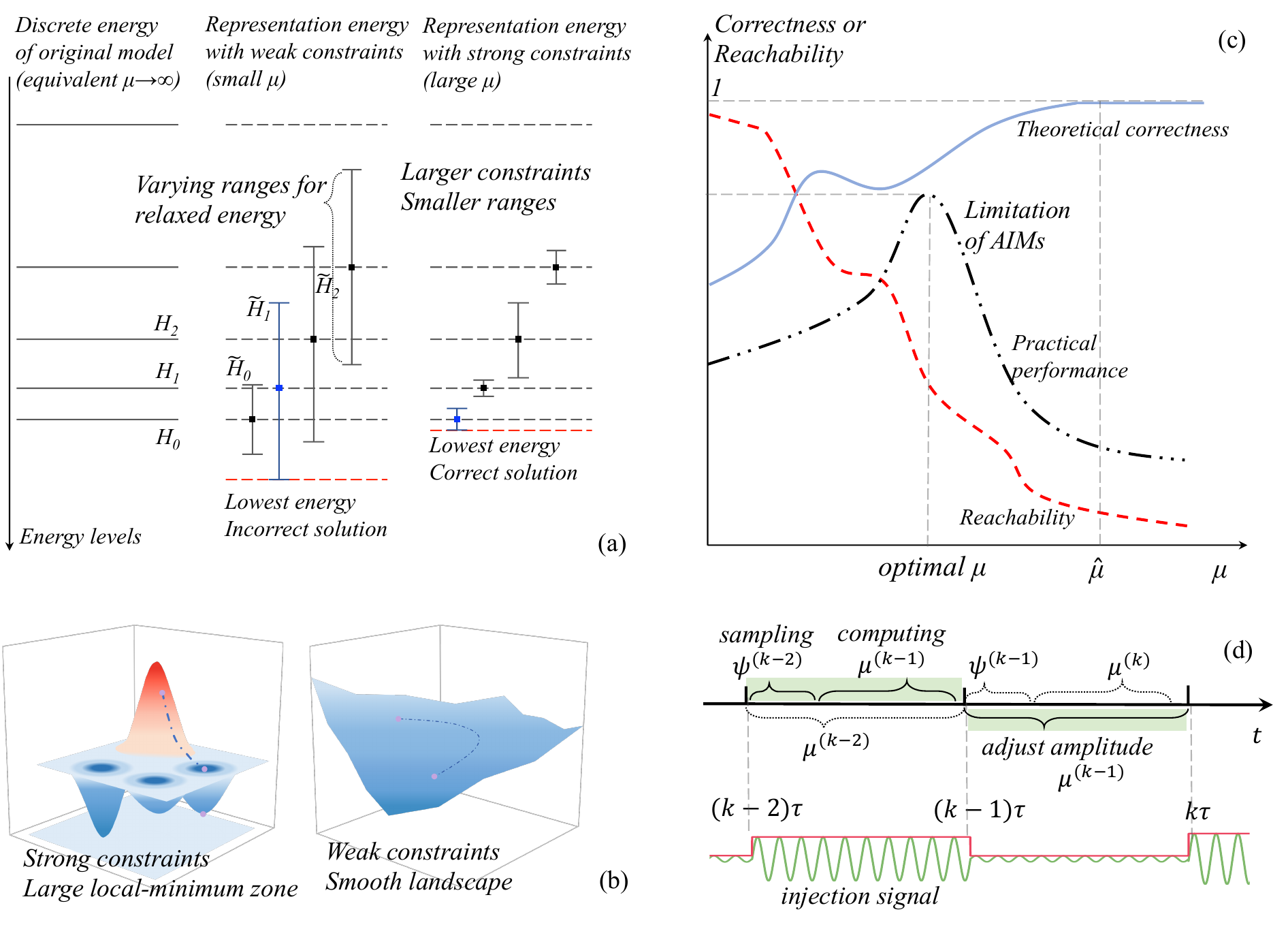}
    \caption{Theoretical limitation of AIM due to the trade-off of correctness and reachability. (a) The primal problem in Eqs. \eqref{eq: primal problem 1}$\sim$\eqref{eq: primal problem 2} can be regarded as having discrete energy levels, which is equivalent to the limit of regularization strength $\mu\to\infty$. For the relaxed problem, the state space is extended from discrete spin configurations $\bm{s}$ in the sets $\mathcal{S}_i$ (which yield Hamiltonian values $H_i$) to continuous representations $F^{-1}(\bm{s})$, such that the representation energy becomes continuous rather than discrete and spans a finite variation range. If, for some $\mathcal{S}_i$ with $i\neq 0$, the variation range of a certain spin is sufficiently large and its lower bound is the minimum among all variation ranges, then the theoretical solutions of the relaxed problem and the original problem are no longer equivalent. By increasing the regularization strength $\mu$, the variation range of the representation energy can be reduced, and when this range becomes sufficiently small, i.e., smaller than $H_1-H_0$, the solutions of the original and relaxed problems become equivalent. (b) An excessively large regularization strength $\mu$ enhances the attraction near the extrema of the regularization term, thereby forming local minima, and due to the non-increasing energy constraint of the Lyapunov process, once the system is trapped in a local minimum it cannot escape. Moreover, the regularization term typically introduces two extrema for each dimension, such as $\{\pm1\}$ or $\{0,\pi\}$, resulting in $\mathcal{O}(2^n)$ local minimum regions in practice. (c) Fig. (a) discusses solution equivalence from the perspective of optimization theory, while Fig. (b) addresses it from the perspective of dynamical systems; however, the former favors larger $\mu$, whereas the latter prefers smaller $\mu$, leading to a trade-off between correctness and reachability and consequently imposing an upper bound on the practical solution quality of AIMs. (d) Within each time slice of duration $\tau$, the controller first samples the oscillator states, then computes the amplitude according to the sampled states and the control algorithm, and outputs the result at the beginning of the next time slice to adjust the injection signal amplitude, and this process is repeated iteratively.
}\label{fig: theoretical}
\end{figure}

\subsection{Control architecture and algorithm design}\label{sec: control architecture design}

The previous discussion has analyzed the effect of $\mu$ on constraint strength, and the magnitude of $\mu$ plays a crucial role in determining system performance. For a time-independent $\mu$, the performance of AIM may be inherently limited, and finding an optimal $\mu$ is also challenging. Therefore, for a traditional AIM as an autonomous system, some external perturbation is required to overcome this limitation. Evidently, $\mu$ can be used as a control variable to manipulate the system state, making the establishment of a time-dependent $\mu(t)$ control scheme essential. Specifically, in the case of OIM, $\mu$ corresponds to the injection amplitude and can thus be tuned independently of the oscillator. However, for BRIM and ROSC, $\mu$ and $R$ are integrated into the oscillator components, making real-time adjustment during synchronization difficult. Here, we still assume that $\mu$ can be independently controlled by an external signal, as this is crucial for control architecture design and could be feasible in future implementations. Moreover, since control pulses in certain algorithms are not necessarily isotropic, $\mu(t)$ can be extended into a vector $\bm{\mu}(t)=[\mu_1(t),\mu_2(t),\cdots]^\top,~\mu_i (t) \in \mathbb{R}$, allowing heterogeneous control strength across oscillators. As shown in the SI \ref{si-sec: adaptive injection design}, such a vectorized $\bm{\mu}(t)$ controller enables full control of the system state, transforming the autonomous AIM into a controlled system, which is denoted as CAIM, whose dynamics are given by 
\begin{align}
\frac{\rmd \phi_i}{\rmd t} = -\frac{\partial K }{ \partial \phi_i} - \mu_i (t)\frac{\partial R}{ \partial \phi_i}.
\end{align}

Unlike previous studies, this work employs feedback control to construct the control framework. Feedback control operates in a closed-loop configuration, which generally achieves superior performance compared to open-loop control but also introduces higher computational complexity, power cost, and requires observable system states. In physical implementation, an additional analog or digital controller module is required alongside the AIM to perform the sampling–feedback process. Ignoring chip size and power constraints, a digital module (e.g., FPGA) allows easier compilation of control algorithms; however, for state-dependent algorithms that compute coupling-related intermediate variables, the computational complexity is typically $\mathcal{O}(n^2)$, implying that time delay must be carefully considered for high-speed AIMs. In contrast, an analog controller may achieve smaller time delay but is harder to implement in hardware. The general control model can thus be written as 
\begin{align}
    \mu_i (t) = \int_0^t \rmd t^\prime ~ f_i \Big(\psi(t^\prime)\Big).
\end{align}
Considering delay, since the control algorithm and hardware modules are relatively fixed, the delay remains approximately constant. Therefore, a sampling period $\tau$ slightly greater than the delay can be set as the sampling–feedback time interval, as shown in Fig. \ref{fig: theoretical}d. At the beginning of each period, the controller samples the oscillator states, performs control computation, and outputs control signals at the end of the period. Repeating this process forms a complete control chain. It is worth noting that the chain is asynchronous, since computation requires time and the oscillator states continue evolving during the control cycle. For example, the control algorithm proposed in this work depends on the previous two sampled states, and can be expressed as 
\begin{align}
        \frac{\rmd \phi_i}{\rmd t} =& -\frac{\partial K }{ \partial \phi_i} - \mu_i^{(k)} \frac{\partial R}{ \partial \phi_i}, \qquad \qquad\text{for } k\tau \leq t<(k+1)\tau,\\
    \bm{\mu}(t)\equiv &\bm{\mu}^{(k)} = f\big(\psi^{(k-1)},\psi^{(k-2)}\big),\quad \text{for } k\tau \leq t<(k+1)\tau .
\end{align}

Inspired by adaptive PID control optimization and momentum-based optimizers, which are equivalent to second-order Lyapunov systems capable of escaping local minima in non-convex optimization \cite{chen2024accelerated}, this work designs a control algorithm based on the control Lyapunov function (CLF) combined with momentum optimization. The CLF guarantees that system dynamics evolve along the gradient descent direction, satisfying Lyapunov conditions and improving convergence speed since it follows the steepest descent path. Meanwhile, the momentum optimizer mitigates the tendency of gradient descent to become trapped in local minima, thus overcoming traditional AIM limitations and improving accuracy. Combining both enables a balance between convergence rate and solution precision. Given an energy function $E^\prime (\psi) = K(\psi) + \mu^\prime R(\psi) $, the control Lyapunov function $\bm{\mu}(t)$ must satisfy
\begin{align}
     \langle \nabla E^\prime (\psi),  \dot{\psi} \rangle  =    ( \nabla E^\prime (\psi))^\top (  \nabla K (\psi) +  \nabla R (\psi) \cdot\bm{\mu}) \leq 0.
\end{align}
Then a direction can be derived to lead the fastest descent of energy
\begin{align}
    \bm{\mu} \propto  \nabla E^\prime (\psi) \cdot \nabla R (\psi) ,
\end{align}
which means that the CLF enforces gradient-descent-like dynamics, maintaining Lyapunov stability but converging rapidly into local minima. Second-order optimizers such as the momentum method or heavy-ball method can help escape such minima. Considering the asynchronous nature and computational delay, a discrete asynchronous momentum optimizer form can be derived such that
\begin{align} \label{eq: async momentum}
    \theta^{(t+1)} = \theta^{(t)} + \beta \Big( \theta^{(t-1)} - \theta^{(t-2)} \Big) - \eta \nabla \mathcal{J}(\theta^{(t-1})),
\end{align}
where $\mathcal{J}(\theta):\mathbb{R}^n \rightarrow \mathbb{R}$ denotes the objective function and $\{\theta^{(t)}\}$ the sequence of optimization variables; $\beta \in (0,1), \eta > 0$ is the decay ratio and learning rate. Interpreting each discrete iteration as one CAIM time step, the analytical control formulation of the proposed adaptive injection algorithm for CAIMs can be given as
\begin{align}\label{eq: adaptive algorithm}
 \bm{\mu}^{(k)} = -\beta \frac{1}{\nabla R(\psi^{(k-1)}) }\Big(\psi^{(k-1)} - \psi^{(k-2)}\Big)  +\eta \frac{\sqrt{n}}{\Vert E (\psi^{(k-1)}) \cdot\nabla R(\psi ^{(k-1)}) \Vert_{\ell_2}} \nabla E (\psi^{(k-1)}) \cdot\nabla R(\psi ^{(k-1)}),
\end{align}
where $\frac{1}{\nabla R(\psi ) }=[ \frac{1}{\frac{\partial R}{\partial \phi_1 }},\frac{1}{\frac{\partial R}{\partial \phi_2 }},\cdots]$ and $\frac{\sqrt{n}}{\Vert E (\psi) \cdot\nabla R(\psi) \Vert_{\ell_2}} $ is a normalization factor. Hence, the system dynamics can be viewed as: within each time slice, the asynchronous momentum optimizer determines the global energy descent direction, while intra-slice dynamics are governed by the natural evolution of AIM oscillators. This effectively achieves adaptive adjustment of $\mu$, surpassing the limitation of conventional AIMs with time-independent $\mu$. 

The CAIM control algorithm renders system dynamics approximately equivalent to an asynchronous momentum optimizer. Gradient-based optimizers are usually analyzed under convex or smooth assumptions, since theoretical characterization in non-convex scenarios is difficult. Unlike conventional asynchronous optimizers, the gradient term $- \eta \nabla \mathcal{J}(\theta^{(t-1}))$ in Eq. \eqref{eq: async momentum} is also asynchronous due to time delay, not synchronous as $- \eta \nabla \mathcal{J}(\theta^{(t}))$ \cite{Li2024Ordered}. If $\mathcal{J} ( \theta)$ is convex and $L$-Lipschitz continuous, $\nabla \mathcal{J} (\theta) $ is bounded $\Vert \nabla \mathcal{J} (\theta)  \Vert_{\ell_2} \leq C $,  then for $\beta \in [0,1), \eta \in (0, (1 - \beta )/L]$, the regret of the sequence $\{ \theta^{(t)} \}$ satisfies 
\begin{align}
\frac{1}{T}\sum_{t=1}^T  \Big (\mathcal{J}(\theta^{(t)}) -\mathcal{J}(\theta^*) \Big)  \sim \mathcal{O}  \Big(\frac{\eta^2 L(1+\beta)}{2(1-\beta)^2}C^2  \Big)  .
\end{align} 
The cumulative error thus scales as $\mathcal{O}(T)$, in contrast to $\mathcal{O}(1)$ for synchronous momentum optimizers \cite{ghadimi2015global}, which automatically eliminate steady-state error. The asynchronous version maintains a constant steady-state deviation from the optimum, which neither vanishes nor diverges. Therefore, asynchrony introduces an irreducible steady-state error that can be mitigated by parameter tuning but not fundamentally eliminated. Moreover, excessive delay in the CAIM is equivalent to a larger $\mu$, which will increase error and may lead to instability. Hence, CAIM delay must remain within a reasonable range and be minimized to ensure stability and convergence. Discussion of the special case without delay can be found in the SI \ref{si-sec: adaptive injection design}.

\subsection{Experimental demonstration of CAIM with 51-node FPGA-OIM}

\begin{figure}
    \centering
    \includegraphics[width=\textwidth]{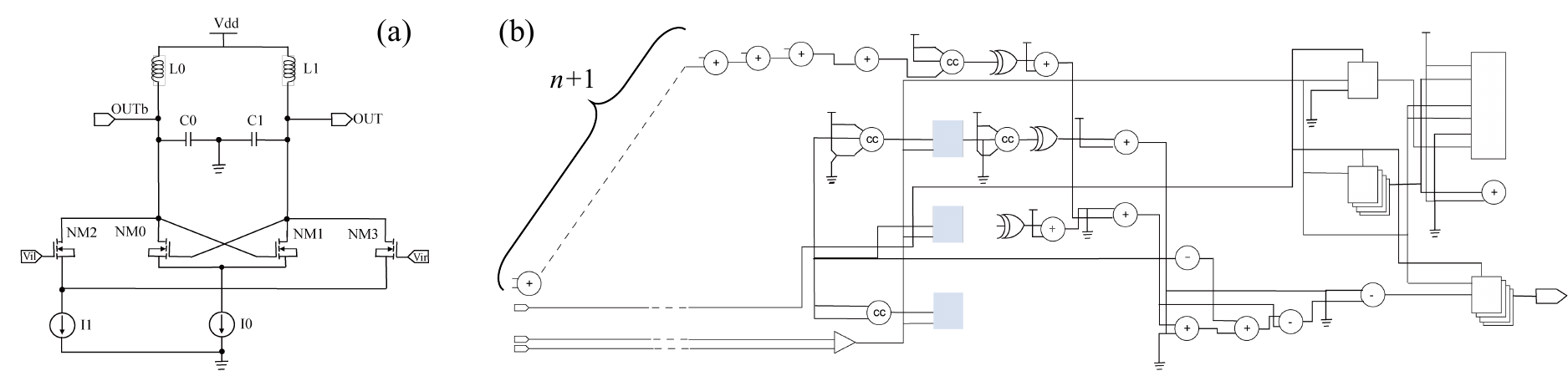}
    \caption{Oscillator unit and FPGA workflow design. }\label{fig: circuit}
\end{figure}

\begin{figure}
    \centering
    \includegraphics[width=\textwidth]{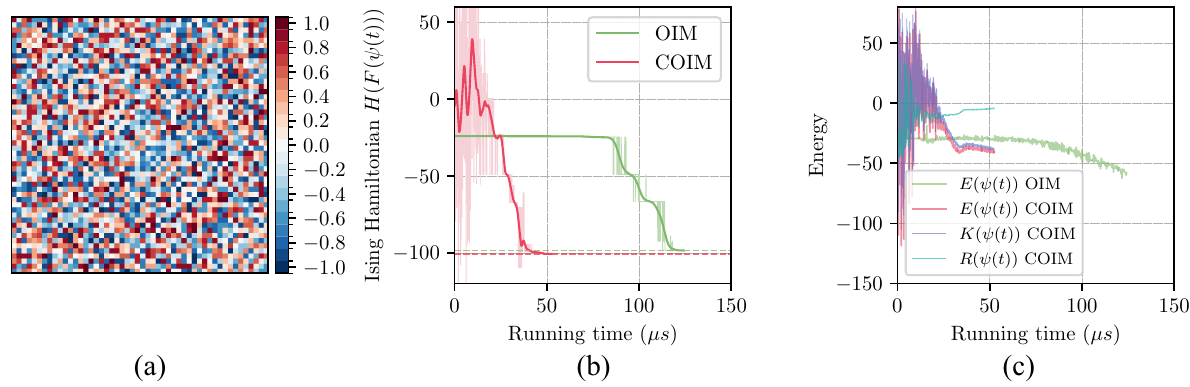}
    \caption{Comparison of COIM and OIM circuit experimental results. (a) An instance of a weighted MaxCut problem (SpinModel) with 50 nodes, where each row and column contains 50 grid points and the color indicates the coupling weight. (b) The real-time evolution of the Ising Hamiltonian during the solving process, where the light-colored curves denote raw energy values and the dark-colored curves represent smoothed results. COIM exhibits faster convergence and yields solutions closer to the ground state. (c) The evolution of different energy components during the solving process for OIM and COIM.
}\label{fig: compare}
\end{figure}

\begin{figure}
    \centering
    \includegraphics[width=\textwidth]{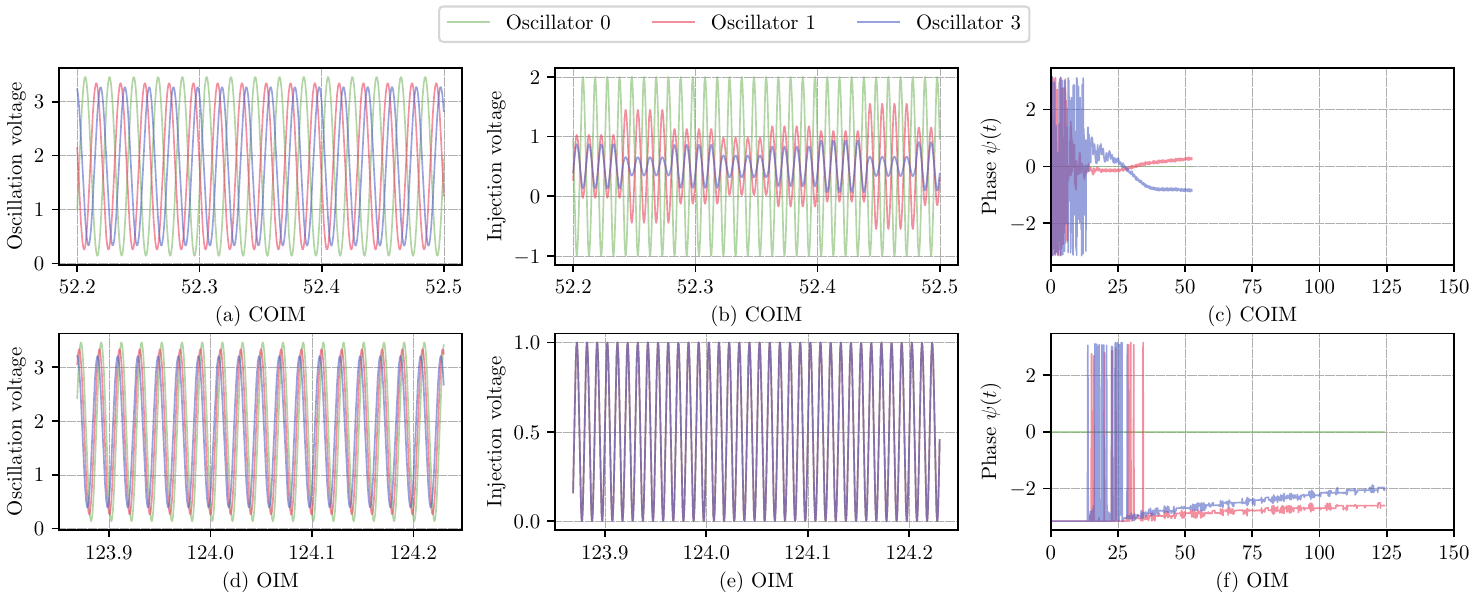}
    \caption{Differences between COIM and OIM during evolution (Time ($\mu$s), Voltage (V)). Since displaying all 51 oscillators would be redundant, oscillators 0, 1, and 3 are selected as representatives. (a) Oscillation waveforms of COIM oscillators near convergence. (b) The corresponding injection signals, showing that the injection amplitudes in COIM are time-varying and independently controlled for each oscillator. (c) The full-time phase evolution of COIM oscillators, whose rate of change is significantly higher than that of OIM. (d) Oscillation waveforms of OIM oscillators near convergence. (e) Injection signals of OIM, where the curves overlap because the amplitudes are constant and identical across oscillators. (f) The full-time phase evolution of OIM oscillators, which is noticeably slower than that of COIM.
}\label{fig: slidewindow}
\end{figure}

To experimentally validate the proposed Controlled AIM (CAIM) framework, we adopt the LC–oscillator based weighted Ising machine reported in \cite{chou2019analog} as the proof-of-concept physical substrate, as shown in Fig.~ \ref{fig: circuit}, with implementation details, experimental details and benchmarking presented in Sec. \ref{sec: methods}. The oscillator network consists of 51 LC oscillators to solve a fully connected Ising problem with bias (the $\bm{h}$ vector) on $n=50$ nodes. Each node employs identical components, with capacitance $C=4\,n\mathrm{F}$ and inductance $L=2.5\, n\mathrm{H}$, yielding a theoretical frequency of 50~MHz corresponding to a period of $T=20$~ns, and a DC bias set to 0.5~V. The controller is implemented using an FPGA (Xilinx Kintex-7 XC7K410T), with a reference injection amplitude of 0.5~V and a control delay set to $\tau=2T$; if the controller latency becomes excessive, the operating frequency can be reduced accordingly. The Ising problem is based on the SpinModel, which is equivalent to a weighted fully connected MaxCut problem, with parameters $J_{ij},h_i\in\pm0.1$ to $\pm1.0$; although $J$ and $h$ follow the same distribution, the effective coupling strength of $\bm{J}$ is twice that of $\bm{h}$ because $J_{ij}=J_{ji}$. In the oscillator network, the coupling strengths $J_{ij}$ are implemented using variable potentiometers. The 51 oscillator units are indexed from 0 to 50, where units 1 to 50 form the oscillator network based on $J_{ij}$, and unit 0 implements the bias $\bm{h}$ by coupling to the other oscillators with strengths $h_i$. Oscillator 0 plays a special role as a reference oscillator, and its specific functions in phase sampling and injection signaling are discussed in detail in Sec.~\ref{sec: methods}.

The experimental results are shown in Fig.~\ref{fig: compare}. Fig.~\ref{fig: compare}a presents a specific Ising problem, and the corresponding circuit experimental comparison is shown in Fig.~\ref{fig: compare}b. From the Hamiltonian decay curves in Fig.~\ref{fig: compare}a, COIM exhibits a clear advantage in convergence speed, achieving convergence in only $52.5\,\mu$s compared with $118.5\,\mu$s for OIM, corresponding to more than a twofold acceleration. Moreover, COIM finds solutions with lower energy and closer to the ground state, which typically implies a higher success probability of $1.37\times10^{-3}$ compared with $1.28\times10^{-3}$ for OIM, representing an improvement of approximately 7\%. Since accuracy improvements become increasingly difficult as solutions approach the ground state of a QUBO problem, this improvement is more clearly reflected in the time-to-solution (TTS), which is approximately 1.76~s for OIM and 0.43~s for COIM, corresponding to a 76\% reduction. These results demonstrate that COIM outperforms OIM in both accuracy and speed, achieving comprehensive performance improvements.

By combining Fig.~\ref{fig: compare}c and Fig.~\ref{fig: slidewindow}, the role of the adaptive injection algorithm can be understood more intuitively. As shown in Fig.~\ref{fig: compare}c, at the initial stage, when the oscillators are charging and starting to oscillate, the amplitudes are small and COIM has difficulty estimating real-time phases from subtle voltage differences, leading to pronounced oscillations in $E(\psi(t))$, $R(\psi(t))$, and $K(\psi(t))$. As the phases become more accurately estimated, the adaptive algorithm adjusts the injection amplitudes in real time based on the phases, as shown in Fig.~\ref{fig: slidewindow}b; according to the algorithm design, the adjusted injection locking strength drives the system along the direction of fastest energy descent, resulting in rapid energy reduction. As shown in Fig.~\ref{fig: slidewindow}c, the oscillator phases in COIM evolve much more rapidly than those in OIM. Additionally, in the tail region of Fig.~\ref{fig: compare}c, the regularization energy increases, indicating that the binarization constraint is relaxed at that stage; although the total energy of OIM at convergence is lower than that of COIM, the absence of constraint relaxation causes optimization of the binarization term to dominate the energy reduction, trapping OIM in a local minimum with a higher Ising Hamiltonian. In contrast, COIM relaxes the binarization constraint at appropriate times, enabling it to escape local minima and achieve a lower Ising Hamiltonian closer to the ground state, even if the total energy is not minimal. This phenomenon further validates the analysis of constraint strength presented in Sec.~\ref{sec: reachability}.

\section{Discussion}

In this work, we first present a generalized mathematical formulation of AIM, and by treating the constraint strength as a Lagrange multiplier, we establish an analytical framework that integrates dynamical systems theory and optimization theory to analyze the solving process, accuracy, and speed of AIM, and to reveal its theoretical performance limit. We then propose the CAIM framework, which treats the injection locking strength as a control variable, constructs a sampling–feedback control loop, and integrates asynchronous momentum methods with control Lyapunov functions to design an adaptive injection algorithm that adjusts the injection locking strength in real time, steering the solving process along the fastest descent direction while enhancing the ability to escape local minima. Finally, through comparative experiments on an FPGA-based LC-OIM platform, we validate the interpretability of the theoretical framework and demonstrate the superiority of the CAIM architecture and adaptive injection algorithm, particularly in terms of speed and accuracy improvements.

Several theoretical and practical assumptions are made, such as the adjustability of $\mu$ across different AIM implementations. Since control is realized by adjusting the injection amplitude, the amplitude is treated as a mathematical representation of constraint strength, and only the solving properties are analyzed; however, in physical circuits, amplitude is related to power and energy and affects the startup dynamics of oscillators, which is not considered here. Moreover, although the demonstrated controller is digital, future implementations may be analog and integrated with the oscillator network on the same chip. These aspects represent promising directions for future research.

In the Supplementary Information, more detailed mathematical descriptions, proofs, and explicit assumptions are provided. Furthermore, by abstracting away the physical circuit implementation, this work analyzes the CAIM architecture purely from a mathematical dynamical perspective and presents numerical experiments to investigate its properties with respect to injection strength, hyperparameters, and noise.

\section{Methods}\label{sec: methods}

\subsection{Phase sampling and injection feedback}

The FPGA controller performs phase sampling and input feedback via ADCs and direct digital synthesizer (DDS). The FPGA is connected to four groups of 8-bit ADCs for serial sampling, and since the ADC sampling delay is much smaller than the oscillator period, phase inaccuracies caused by serial sampling can be neglected. The oscillator phase cannot be obtained directly through sampling and must instead be computed, as sampling only provides direct circuit quantities such as current or voltage. By detecting the peaks (corresponding to phase 0) and valleys (corresponding to phase $\pi$) of the oscillation waveform and recording their times, linear extrapolation can be used to compute the phase at any time before the next peak or valley, denoted as 
\begin{align}
    \Delta\phi_i(t)=2\pi\frac{t-t_{i,k}}{t_{i,k+1}-t_{i,k}},
\end{align}
where $t_{i,k}$ and $t_{i,k+1}$ denote two consecutive detected extrema of the $i$-th oscillator. This method yields the real-time phase of the oscillator network, \iie, $\omega t+\phi_i$, whereas the control algorithm uses the characteristic phase $\phi_i$. Since all oscillators are oscillating, the oscillatory term $\omega t$ is eliminated by taking oscillator 0 as a reference, such that the phases of oscillators 1 to 50 are expressed as their real-time phases minus that of oscillator 0. Because oscillator 0 implements the bias $\bm{h}$ and must remain stable near $\pm1$, its injection locking signal is stronger and fixed at $4\mu$, independent of the control signal, such that its phase is $\omega t+0$ or $\omega t+\pi$, and subtracting this reference yields the characteristic phases of the other oscillators. Peak and valley detection requires voltage sampling at a frequency 20 times the oscillator frequency. After sampling, each oscillator uses three digital registers to store the current and two previous voltage samples, and comparisons are performed to identify peaks and valleys. To filter noise, the interval between detected peaks and valleys is required to exceed 0.4 times the period. Since the detected peak or valley occurs one sampling interval after the true event, the computed real-time phase lags by one sampling interval; however, the characteristic phase is unaffected because this lag is canceled when subtracting the reference oscillator phase.

The FPGA periodically adjusts the injection locking amplitude by computing the adaptive injection algorithm in Eq.~\eqref{eq: adaptive algorithm}. In Eq.~\eqref{eq: adaptive algorithm}, computing the $\ell_2$ norm and $\sqrt{n}$ requires square-root operations that consume substantial computational resources; since these terms serve only as normalization constants corresponding to Euclidean distance, the FPGA implementation replaces the $\ell_2$ norm with the $\ell_1$ norm and $\sqrt{n}$ with $n$, denoted as
\begin{align}
\frac{\sqrt{n}}{\Vert E (\psi^{(k-1)}) \cdot\nabla R(\psi ^{(k-1)}) \Vert_{\ell_2}} \approx  \frac{n}{\Vert E (\psi^{(k-1)}) \cdot\nabla R(\psi ^{(k-1)}) \Vert_{\ell_1}}.
\end{align}
In addition, to prevent excessive current from damaging transistors, an upper bound is imposed on the injection amplitude, set to $4\mu'$, such that when the absolute value of the control output exceeds $4\mu'$, the actual injected amplitude is clipped to $4\mu'$.

\subsection{Benchmarking}

When evaluating an Ising machine, the most important metrics are solution accuracy, solution speed, and the overall efficiency that balances both. Extensive prior work has discussed these aspects \cite{king2015benchmarking,hamerly2019experimental}, with the most representative metrics being the success probability $P$, defined as the probability that a single run reaches the ground state, the runtime $t_{\mathrm{run}}$ characterizing speed, and the time-to-solution (TTS) that accounts for both accuracy and speed, defined as $v = t_{\mathrm{run}} \log(1-0.99) / \log(1-P)$. However, the methods used to obtain $P$ and $t_{\mathrm{run}}$ require careful consideration.

For a given Ising instance, directly comparing the Hamiltonian values of different spin configurations allows straightforward evaluation of solution quality. However, energy comparisons across different problem instances are meaningless, even for instances of the same problem class. The success probability $P$ requires determining whether the obtained energy equals the ground-state energy, yet for high-dimensional Ising problems the ground-state energy is often unknown. Consequently, it is common to assume that a solution is successful if its energy is sufficiently close to the ground-state energy, i.e., within a tolerance $\varepsilon$. The choice of $\varepsilon$, however, cannot simultaneously accommodate different types of Ising problems, since practical applications often involve irregular and unstructured parameters rather than regular instances such as MaxCut. Therefore, cross-instance evaluation metrics are essential, and the ground-state approximation ratio serves as a useful alternative, $\frac{H_{\max} - H(\bm{s}) }{H_{\max} -H_{\min} }$, which quantifies how close a given energy level is to the ground-state energy. By bounding the energy using $H_{\min} \geq -n\Vert \bm{J}\Vert_\infty - \Vert h\Vert_1,H_{\max} \leq n\Vert \bm{J}\Vert_\infty + \Vert h\Vert_1 $ and $ \mathcal{O}(\Vert \bm{J}\Vert_\infty) \sim \mathcal{O} (\frac{1}{\sqrt{n}} \Vert \bm{J}\Vert_F)$, an estimated approximation ratio can be defined accordingly as
\begin{align}
    r=\frac{H_{\max} - H(\bm{s}) }{H_{\max} -H_{\min} }\approx \frac{1}{2} - \frac{H(\bm{s})}{2(\sqrt{n}\Vert \bm{J}\Vert_\infty + \Vert h\Vert_1 )}.
\end{align}

Because the computation of $P$ is not standardized \cite{hamerly2019experimental,lo2023ising}, and motivated by the assumption that the expected approximation ratio is positively correlated with success probability, here adopt an empirical numerical formula inspired by prior work to estimate $P$, given by \begin{align}
\widehat{P} = \exp (-\sqrt{n}(1.35- r) ) , 
\end{align}
where the coefficient $1.35$ is chosen as a conservative upper bound observed from numerical experiments on approximation ratios; in most cases, the observed values do not exceed $1.25$, and the chosen value ensures robustness around the scale of $1 \pm 0.2$. The factor $\sqrt{n}$ corresponds to the slope of this function on a logarithmic scale and serves as a heuristic balance for the degradation of approximation quality with increasing problem size.

The runtime $t_{\mathrm{run}}$ has often been used for cross-platform comparisons among solvers such as simulated annealing, CIMs, AIMs, and quantum annealers, where the runtime is typically predefined and does not require direct measurement. Unlike sampling-based iterative algorithms such as simulated annealing, AIMs are dynamical systems governed by a Lyapunov function representing energy, and the system evolution naturally approaches convergence. If a predefined runtime is too long, post-convergence time may be unnecessarily included, leading to an overestimation of $t_{\mathrm{run}}$. Thus determine $t_{\mathrm{run}}$ based on the convergence of the energy function, defining the runtime as the time at which the energy variation falls below a predefined threshold, as given by
\begin{align}
    \frac{\max \{E_\mathrm{seg}\}   - \min \{ E_\mathrm{seg} \} }{E_\mathrm{init}-E_\mathrm{now}} \leq 0.025, \quad\{ E_\mathrm{seg} \} :=\{ E(\psi(t)): t_\mathrm{now} - \varepsilon \leq t\leq t_\mathrm{now}\} .\label{eq: convergence detection}
\end{align}

Accordingly, the TTS is computed as 
\begin{align}
    v = t_{\mathrm{run}} \frac{\log(1-0.99)}{ \log(1-\widehat{P} )}.
\end{align}
It should be noted that $t_{\mathrm{run}}$ may represent either the physical time of a hardware circuit or a dimensionless mathematical phase time describing system evolution in state space. The correspondence between these two notions is determined by the circuit gain $A_s$ in the dynamical equation $\frac{\rmd \phi_i}{\rmd t} = -A_s \frac{\partial E}{\partial \phi_i}$. The parameter $A_s$ depends on circuit implementation details such as device characteristics and operating frequency, but from an experimental perspective it can be approximated as being inversely proportional to the signal period $T$, i.e., $A_s \sim \mathcal{O}(1/T)$. As a result, physical runtimes may differ by orders of magnitude across platforms. By abstracting away physical implementation details and setting $A_s = 1$, the phase time provides a normalized basis for analyzing AIM dynamics.

\bibliography{ref}

\section*{Supplementary information}
\appendix

\section{Formulation of AIM}\label{si-sec: formulation}

\begin{figure}[h]
    \centering
    \includegraphics[width=\textwidth]{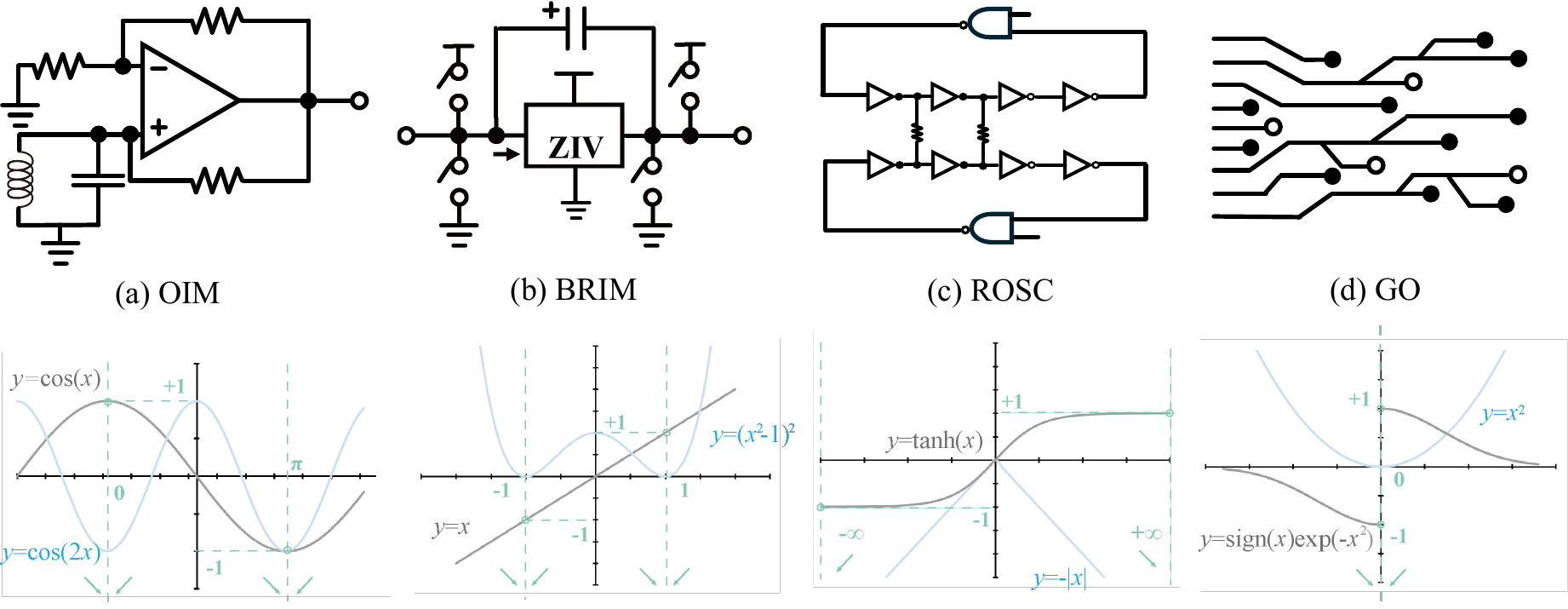}
    \caption{State-of-the-art AIM types and their modelings.}\label{si-fig: formulation}
\end{figure}

The implementation of AIM follows several representative architectures, including the LC–oscillator Ising machine (OIM), bistable resistively-coupled Ising machine (BRIM), and the ring-oscillator Ising machine (ROSC). Based on these realizations, a unified abstract model is summarized as presented in Def. \ref{si-def: aim}, and the corresponding mappings for each AIM variant are listed in Tab. \ref{si-tab: aim types}. It is worth noting that for BRIM, the function $R(\psi)$ is derived as a mathematical approximation; its practical form can be referenced from \cite{Afoakwa2021BRIM,Zhang2022CMOS}, and the approximation is sufficiently close to the actual behavior in terms of functional shape. For ROSC, the representation function is ideally modeled as a square wave and therefore expressed using the sign function $\sgn(\cdot)$, but such a formulation violates continuity and has zero gradient. Additionally, real physical voltages cannot switch instantaneously. Consequently, in our analysis we replace the sign function by $\tanh(\gamma \phi_i)$, and the ROSC representation becomes $K_{\mathrm{ROSC}} = \sum_{i,j} \tanh(\gamma \phi_i)\tanh(\gamma \phi_j)$. Since $\gamma \to \infty$ implies $\tanh(\gamma \phi) \to \sgn (\phi)$, and $\gamma |\phi| \to \infty$ implies $\tanh(\gamma \phi) = \pm 1$, we treat $\gamma \phi$ as a single binarization variable and use $\mu$ as the effective replacement of $\gamma$. This leads to the approximated binarization function $R(\psi) \approx -\sum_{i} |\phi_i|$, which achieves its minimum at infinity and corresponds to the upper and lower limits of a periodic pulse.

To investigate the generalization capability of the proposed control architecture, we additionally consider a hypothetical oscillator-based analog Ising machine that does not yet exist in practice, namely the Gaussian-augmented oscillator (GO). Its binarization function attains its sole extremum at zero. Unlike $R_{\mathrm{OIM}}$, $R_{\mathrm{BRIM}}$, and $R_{\mathrm{ROSC}}$, which all possess two extrema at $\pm 1$, the binarization function of GO contains only a single extremum, making $R_{\mathrm{GO}}$ smoother and preventing the system from becoming trapped in local minima. However, its representation function is discontinuous at zero, with left and right limits both having magnitude one but different signs $\pm 1$. Such a function exhibits convergence behavior different from the other AIM variants. Although this architecture has not yet been realized, it is plausible that similar systems may emerge in future research. Therefore, GO is included in the generalization discussion to highlight the potential capability of the proposed control framework. A summarization can be found in Fig. \ref{si-fig: formulation} and Tab. \ref{si-tab: aim types}.

\begin{definition}[Analog Ising machine]\label{si-def: aim}
The general model of the analog Ising machine is determined by 5 elements, the representation energy function $K:\Omega \rightarrow \mathbb{R}  $, the binarization energy function $R:\Omega \rightarrow \mathbb{R}$, the decision function $F:\Omega \rightarrow \mathscr{S}$, the weight of regularization $\mu \in \mathbb{R}^+$, and the dynamics $ \frac{\rmd \psi}{\rmd t}$, where it should satisfy the following conditions:
\begin{enumerate}
    \item Condition of Lyapunov function 
    \begin{align}
    E(\psi(t)):=&  K(\psi(t)) + \mu R(\psi(t)),\\
    \dot{E} (\psi(t)) =& \langle \nabla E , \dot{\psi} \rangle \leq  0;
    \end{align}
    \item Condition of equal energy under binarization
    \begin{align}
    K(\psi ) = H (F(\psi)), \quad \forall \psi \in \{ \psi :  R(\psi) =  \min  R(\psi)\}. \label{si-eq: equal binarized energy}
    \end{align}
\end{enumerate}
\end{definition}

\begin{table}[h]
\centering
\resizebox{\textwidth}{!}{

\begin{tabular}{c cccc}
\hline\hline
 & OIM&BRIM&ROSC&GO \\ \hline
 $K(\psi)$& $ \quad \sum_{i,j} J_{ij} \cos(\phi_i- \phi_j) \quad$ & $\quad\sum_{i,j} J_{ij} \phi_i \phi_j \quad$ &$\quad\approx\sum_{i,j} J_{ij} \tanh (\phi_i )\tanh(\phi_j) \quad$  & $\quad
     \sum_{i,j} J_{ij} \sgn (\phi_i ) \exp (-\frac{\phi_i^2}{2}) \sgn(\phi_j) \exp (-\frac{\phi_j^2}{2})\quad$\\
$R(\psi)$& $-\sum_i \cos(2\phi_i)$ & $\approx\sum_{i} (\phi_i^2-1)^2  $ & $ \approx - \sum_{i}|\phi_i|$  & $\sum_{i} \phi_i^2$\\
$\mu $& independent & integrated & integrated  & / \\
$F(\psi)$ & $\sgn (\cos(\phi_i))$ &  $\sgn (\phi_i)$ &  $\sgn (\phi_i)$  &  $\sgn (\phi_i)$ \\
$\frac{\rmd \phi_i}{\rmd t}$& $-\nabla E(\psi) $ & $-\nabla E(\psi) $ & $-\nabla E(\psi) $  & $-\nabla E(\psi) $ \\
\hline\hline
\end{tabular}
}
\caption{Typical AIM examples.}
\label{si-tab: aim types}
\end{table}

It is typically assumed that the decision function $F$ is trivial, such as a sign-based function that determines the final spin state purely by selecting the nearest binary value, since the sign function maps any real number to the closest of $\pm 1$. As this work does not focus on analyzing the influence of $F$, the decision function is always treated as trivial and is required not to access information beyond the system state. If one were to assume an $F$ that always outputs the optimal spin configuration $\bm{s}^\ast$ regardless of its input, such an overly powerful decision function would render all discussions about AIM performance meaningless and would not be realistic for AIM implementations. Existing AIM architectures also employ trivial decision functions. Furthermore, the oscillators in these AIMs are all interconnected according to the Kuramoto model, and their dynamics take the form $\dot{\psi} = -\nabla E(\psi)$.

The implementation of $\mu$ is the key to the proposed control architecture. In LC-oscillator-based OIMs, $\mu$ corresponds to the strength of injection locking and is independent of the oscillator network, allowing each oscillator to be independently changed with a time-varying control signal. In contrast, for BRIM and ROSC, the implementation of $\mu$ is embedded within the electronic components and therefore cannot currently be adjusted independently. However, it is plausible that equivalent mechanisms for modulating $\mu$ may be realized in future hardware. Since our goal is to analyze the general control framework independent of specific hardware constraints, we assume that $\mu$ is adjustable and omit physical implementation limitations in the theoretical discussion.

\section{Auxiliary analysis of the dynamical system of AIM}
\subsection{Correctness and reachability}\label{si-sec: correctness}

In discussing correctness, the concept originates from optimization theory and assumes an idealized AIM dynamics in which the system naturally evolves toward the global minimum-energy state $\psi^\ast$, which is also the intent of the Lyapunov process underlying AIMs. Therefore, $\psi^\ast$ is directly tied to the conclusions in this section, and correctness becomes independent of the system dynamics, since the process of reaching $\psi^\ast$ is idealized and thus ignored.

\begin{definition}[Equivalent solution]\label{si-def: equivalent solution}
Given an AIM defined in Def. \ref{si-def: aim} and an Ising problem $\bm{J},\bm{h}$, if the spin configuration of the minimum-energy state $\psi_*$ has the lowest Ising Hamiltonian, \iie, 
\begin{align}\label{si-eq: equivalent solution}
    F(\psi_*) \in \mathcal{S}_0,
\end{align}
we call the AIM has the equivalent solution with the corresponding Ising problem.
\end{definition}

Def.~\ref{si-def: aim} introduces the notion of equivalent solutions, meaning that if the theoretical minimum-energy point corresponds exactly to an optimal spin configuration, then the AIM solution is considered correct. This imposes constraints on $K$, $R$, $F$, and $\mu$. Based on this definition, a more concrete necessary and sufficient condition for equivalence can be formulated.

Def. \ref{si-def: equivalent solution} gives a definition on the correctness of the solution given by AIM. We give an necessary and sufficient condition on this.

\begin{proposition}[Condition of equivalent solution]\label{si-prop: condition of equivlent solution}
To satisfy the Def.~\ref{si-def: equivalent solution}, the AIM should satisfy
\begin{align}
   \min_{\psi \in F^{-1}~ (\mathcal{S}_0)} E( \psi) \leq \min_{\psi  \in F^{-1} (\mathscr{S}/\mathcal{S}_0)}~ E( \psi).
\end{align}
\end{proposition}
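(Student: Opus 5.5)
The plan is to reduce the statement to an elementary fact about minimizing over a partition of the state space. Since the decision function maps $F:\Omega\to\mathscr{S}$, and $\mathscr{S}$ is the disjoint union of $\mathcal{S}_0$ and $\mathscr{S}/\mathcal{S}_0$, the preimages $F^{-1}(\mathcal{S}_0)$ and $F^{-1}(\mathscr{S}/\mathcal{S}_0)$ form a partition of $\Omega$. Consequently
\begin{align}
\min_{\psi\in\Omega}E(\psi)=\min\Big\{\min_{\psi\in F^{-1}(\mathcal{S}_0)}E(\psi),\ \min_{\psi\in F^{-1}(\mathscr{S}/\mathcal{S}_0)}E(\psi)\Big\}.
\end{align}
Both directions of the proposition will then be read off from this identity, with $\psi_*:=\arg\min E(\psi)$ as in Def.~\ref{si-def: equivalent solution}.

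\emph{Necessity.} Assume $F(\psi_*)\in\mathcal{S}_0$, so $\psi_*\in F^{-1}(\mathcal{S}_0)$. Then
\begin{align}
\min_{F^{-1}(\mathcal{S}_0)}E\le E(\psi_*)=\min_\Omega E\le \min_{F^{-1}(\mathscr{S}/\mathcal{S}_0)}E,
\end{align}
which is exactly the claimed inequality.

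\emph{Sufficiency.} Assume the inequality holds. By the partition identity, the global minimum of $E$ equals the minimum over $F^{-1}(\mathcal{S}_0)$. Hence there is a global minimizer $\psi\in F^{-1}(\mathcal{S}_0)$, and taking $\psi_*$ to be this minimizer gives $F(\psi_*)\in\mathcal{S}_0$.

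I expect the main obstacle to be the two technical points hidden in the word ``$\min$'', which I would handle as follows.

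First, the minima must actually be attained. I would either add the standing assumption that $E$ attains its infimum on each preimage set, or argue it for the concrete models. For OIM, $E$ is continuous and $2\pi$-periodic, so one can reduce to a compact torus. For BRIM, the quartic term in $R$ makes $E$ coercive for $\mu>0$. The subtlety is that $F^{-1}(\bm{s})$ need not be closed, since the sign function is discontinuous at its boundaries. I would therefore check that a minimizer cannot sit on the switching boundary, or else phrase the argument with infima and use the finiteness of $\mathscr{S}$.

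Second, when equality holds, $\psi_*$ is not unique. Here I would invoke the convention stated earlier in the paper, that $\arg\min$ denotes ``one certain'' minimizer, and interpret Def.~\ref{si-def: equivalent solution} as the existence of a global minimizer mapped into $\mathcal{S}_0$. Under that reading, the non-strict inequality is exactly necessary and sufficient. A strict inequality would instead be needed if every minimizer were required to map into $\mathcal{S}_0$, and I would remark on this in a short note.
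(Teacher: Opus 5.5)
Your proof is correct and matches the paper's. The paper simply rewrites $F(\psi_*)\in\mathcal{S}_0$ as the existence of some $\psi_1\in F^{-1}(\mathcal{S}_0)$ with $E(\psi_1)\le E(\psi_2)$ for all $\psi_2\in F^{-1}(\mathscr{S}/\mathcal{S}_0)$, reads off the $\min$ inequality, and leaves implicit the attainment of the minima and the ``one certain minimizer'' convention that you make explicit. Keep the standing attainment assumption, which the $\min$ in the statement already presupposes, rather than your infimum fallback: with infima, sufficiency fails when the infimum over $F^{-1}(\mathcal{S}_0)$ ties the global minimum without being attained there.
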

\begin{proof}
According to the Def.~\ref{si-def: equivalent solution}, Eq.~\eqref{si-eq: equivalent solution} is equivalent to
\begin{align}
&\exists \psi_1  \in F^{-1}~ (\mathcal{S}_0), \forall \psi_2 \in F^{-1} (\mathscr{S}/\mathcal{S}_0),\quad E(\psi_1)\leq E(\psi_2),\\
\Leftrightarrow & \min_{\psi \in F^{-1}~ (\mathcal{S}_0)} E( \psi) \leq \min_{\psi  \in F^{-1} (\mathscr{S}/\mathcal{S}_0)}~ E( \psi),
\end{align}
which completes the proof.
\end{proof}

Prop. \ref{si-prop: condition of equivlent solution} provides a strict necessary and sufficient condition ensuring equivalence, requiring that the minimum attainable energy of the ground-state subspace be strictly lower than that of any excited-state subspace. However, this condition is not intuitive. To obtain a more interpretable understanding of the role of $\mu$, several idealized assumptions about AIM must be introduced. Typically, $F$ is chosen as a sign-like function, and since there are $2^n$ possible spin configurations $\bm{s}$, the preimage $F^{-1}(\bm{s})$ also consists of $2^n$ subspaces of equal size. Moreover, $R$ is usually an even function and independent of sign, implying that $R$ exerts identical influence across all corresponding subspaces. Therefore, subsequent assumptions are made on these subspaces rather than globally across the entire state space.

\begin{assumption}\label{si-asp: asymptotic approch}
Given $K(\psi), R(\psi)$ of an AIM, with $\psi_{\bm{s}} :=\arg \min_{\psi \in F^{-1} (\bm{s}) } R(\psi) ,\psi_\diamond := \arg \min_{\psi \in F^{-1}(\bm{s})} E(\psi)$, suppose there exists a neighborhood $\mathscr{U}_{\bm{s}}$ of $\psi_{\bm{s}}$ for a certain $\bar{\mu} >0 $, satisfying
\begin{enumerate}
    \item $\forall \mu \geq \bar{\mu}$, it holds $\psi_\diamond \in \mathscr{U}_{\bm{s}}$;
    \item For $\psi_1,\psi_2 \in \mathscr{U}_{\bm{s}}$, if $R(\psi_1) \leq R(\psi_2) $, it holds $\Big| K(\psi_1)-H(\bm{s})\Big|\leq \Big| K(\psi_2)-H(\bm{s})\Big|$.
\end{enumerate}
\end{assumption}

Asp. \ref{si-asp: asymptotic approch} introduces the convergence region $\mathscr{U}_{\bm{s}}$ and describes an idealized property of AIM. Since $R(\psi)$ plays the role of a regularization term, its purpose is to attract the state toward its extrema. With sufficiently large regularization strength $\bar{\mu}$, the convergence region can be made small enough so that the minimum-energy point of each subspace lies within this neighborhood. Combined with the equality condition for binarized energies in Eq. \eqref{si-eq: equal binarized energy}, $R$ can be understood as the cost of violating the constraint in an optimization problem, and the assumption reflects an asymptotic consistency: the smaller the value of $R(\psi)$, the closer $\psi$ is to an extremum, and the closer the corresponding energy is to the true Ising Hamiltonian. This assumption is intuitive and consistent with the requirement imposed by Eq. \eqref{si-eq: equal binarized energy}.

\begin{theorem}
Given the AIM defined in Def.~\ref{si-def: aim}, with Asp.~\ref{si-asp: asymptotic approch}, there exists a large enough $\widehat\mu \geq \bar\mu$ satisfying the condition of equivalent solutions, and all $\mu \geq \widehat\mu$ satisfies the condition of equivalent solutions. 
\end{theorem}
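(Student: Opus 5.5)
The plan is to reduce the statement to Prop.~\ref{si-prop: condition of equivlent solution}, comparing the minimum energy on each subspace $F^{-1}(\bm{s})$ with $H(\bm{s})$. The aim is to show that, for $\mu$ large, the minimum of $E$ over $F^{-1}(\bm{s})$ lies within a band around $H(\bm{s})+\mu R_{\min}$ whose width shrinks to zero as $\mu$ grows. Once that width is below $H_1-H_0$, the ground-state subspaces necessarily win.

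\textbf{Step 1 (localization).} Fix $\bm{s}$ and take $\mu\geq\bar\mu$. Let $\psi_\diamond(\mu)$ be the minimizer of $E$ on $F^{-1}(\bm{s})$ and $\psi_{\bm{s}}$ the minimizer of $R$ there. By part 1 of Asp.~\ref{si-asp: asymptotic approch}, $\psi_\diamond\in\mathscr{U}_{\bm{s}}$. Since $R$ is sign-independent and $F$ is elementwise sign-like, $R(\psi_{\bm{s}})=R_{\min}$ is the global minimum of $R$ for every $\bm{s}$. Condition~\eqref{si-eq: equal binarized energy} then gives $K(\psi_{\bm{s}})=H(\bm{s})$.

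\textbf{Step 2 (upper and lower bounds).} Upper bound: $\min_{F^{-1}(\bm{s})}E\leq E(\psi_{\bm{s}})=H(\bm{s})+\mu R_{\min}$. Lower bound: since $R(\psi_{\bm{s}})\leq R(\psi_\diamond)$, part 2 of the assumption applies only in the direction $|K(\psi_{\bm{s}})-H(\bm{s})|\leq |K(\psi_\diamond)-H(\bm{s})|$, which is trivial. So the lower bound needs a separate argument. Set $\delta(\mu):=R(\psi_\diamond)-R_{\min}\geq0$. Optimality of $\psi_\diamond$ gives $K(\psi_\diamond)+\mu\delta\leq H(\bm{s})$, hence $\mu\delta\leq H(\bm{s})-\inf_{\mathscr{U}_{\bm{s}}}K$. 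Provided $K$ is bounded on the relevant region (continuity on a bounded neighbourhood, or periodicity as for OIM), this yields $\delta(\mu)=\mathcal{O}(1/\mu)\to0$.

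I then need $K(\psi_\diamond)\to H(\bm{s})$. Here I would use part 2 of the assumption monotonically. Consider a point $\psi'\in\mathscr{U}_{\bm{s}}$ with $R(\psi')\geq R(\psi_\diamond)$ and $|K-H(\bm{s})|$ controlled by continuity at $\psi_{\bm{s}}$. The cleanest route is that the deviation $D(\psi):=|K(\psi)-H(\bm{s})|$ is a nondecreasing function of $R$ on $\mathscr{U}_{\bm{s}}$, namely $D\leq g(R-R_{\min})$ with $g(0)=0$. This follows from part 2, because points with equal $R$ have equal $D$. Continuity of $K,R$ at $\psi_{\bm{s}}$ gives $g(r)\to0$ as $r\to0$. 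Therefore $\min_{F^{-1}(\bm{s})}E\geq H(\bm{s})+\mu R_{\min}-g(\delta(\mu))$, and $\epsilon(\mu):=\max_{\bm{s}}g(\delta_{\bm{s}}(\mu))\to0$.

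\textbf{Step 3 (conclusion).} Choose $\widehat\mu\geq\bar\mu$ such that $\epsilon(\mu)<H_1-H_0$ for all $\mu\geq\widehat\mu$. This holds for every larger $\mu$ as well, since the bound $\delta\leq C/\mu$ is monotone, which gives the ``all $\mu\geq\widehat\mu$'' clause. Then for $\bm{s}_0\in\mathcal{S}_0$ and $\bm{s}\notin\mathcal{S}_0$:
\begin{align*}
\min_{F^{-1}(\bm{s}_0)}E\leq H_0+\mu R_{\min}<H_1+\mu R_{\min}-\epsilon(\mu)\leq \min_{F^{-1}(\bm{s})}E,
\end{align*}
and Prop.~\ref{si-prop: condition of equivlent solution} finishes the proof. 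The same calculation also suggests the scaling $\widehat\mu\sim\mathcal{O}(1/(H_1-H_0))$ when $g$ is roughly linear.

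The main obstacle is Step 2's lower bound. Part 2 of Asp.~\ref{si-asp: asymptotic approch} gives only an ordering and no rate, so I must supply the continuity and boundedness of $K$ on $\mathscr{U}_{\bm{s}}$, together with the finiteness of $\mathscr{S}$, to turn that ordering into a uniform $\epsilon(\mu)\to0$.
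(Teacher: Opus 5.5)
Your argument is correct, and its skeleton matches the paper's. The paper's $f_{\bm s}(\mu)=K(\psi_\diamond)-H(\bm s)+\mu(R(\psi_\diamond)-R_{\min})$ is exactly $\min_{F^{-1}(\bm s)}E-H(\bm s)-\mu R_{\min}$. Your upper bound is its $f_{\bm s}\le 0$, and its closing condition $-f_{\bm s}(\widehat\mu)\le H_i-H_0$ is your Step 3.

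The routes differ in how the band is shown to collapse. The paper adds the two optimality inequalities for minimizers at $\mu_1<\mu_2$. This shows $R(\psi_\diamond)$ is nonincreasing, $K(\psi_\diamond)$ is nondecreasing, and $f_{\bm s}$ is nondecreasing in $\mu$. Monotonicity gives the ``all $\mu\ge\widehat\mu$'' clause at once, but the paper then simply asserts that $f_{\bm s}(\mu)\to 0^-$.

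You instead obtain $\delta(\mu)=\mathcal{O}(1/\mu)$ from optimality against $\psi_{\bm s}$. You then read part~2 of Asp.~\ref{si-asp: asymptotic approch} as saying $|K-H(\bm s)|=g(R-R_{\min})$ on $\mathscr{U}_{\bm s}$ with $g$ nondecreasing. This actually proves the convergence the paper takes for granted. The ``all $\mu$'' clause then follows directly from $\epsilon(\mu)\to 0$ by the definition of the limit, so your appeal to monotonicity of $C/\mu$ is unnecessary.

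The two approaches also combine to remove your extra hypothesis that $K$ is bounded on $\mathscr{U}_{\bm s}$. The paper's exchange inequality gives $\delta(\mu)\le\delta(\bar\mu)$ for $\mu\ge\bar\mu$. Hence $\mu\delta(\mu)\le H(\bm s)-K(\psi_\diamond)\le g(\delta(\mu))\le g(\delta(\bar\mu))$, which already gives $\delta=\mathcal{O}(1/\mu)$.

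One point of fine print remains in your version. The claim $g(r)\to 0$ as $r\to 0^+$ needs $R$ not to be constant on a neighbourhood of $\psi_{\bm s}$, or $\mathscr{U}_{\bm s}$ to be connected. This is harmless for $R_{\mathrm{OIM}}$ and $R_{\mathrm{BRIM}}$.
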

\begin{proof}
With $\psi_\diamond := \arg \min_{\psi \in F^{-1}(\bm{s})} E(\psi)$ for a certain $\bm{s}$, it holds
\begin{align}
    K(\psi_\diamond ) \leq H(\bm{s}), 
\end{align}
because if $ K(\psi_\diamond ) > H_i$, there will be another $\psi$ satisfying $H_i< K(\psi )< K(\psi_\diamond ), R(\psi) <  R(\psi_\diamond)$, where $E(\psi) <E(\psi_\diamond ) $, which contradicts the definition of $\psi_\diamond$. This means that the final representation energy of the optimal state in the space of $F^{-1}(\bm{s})$ is always smaller than the Ising Hamiltonian.

For a larger $\mu_2,\mu_2 > \mu_1 \geq \bar\mu$, $\psi_{\diamond,2} :=\arg \min_{\psi \in F^{-1}(\bm{s})} K(\psi) + \mu_2 R(\psi)$, $\psi_{\diamond,1} :=\arg \min_{\psi \in F^{-1}(\bm{s})} K(\psi) + \mu_1 R(\psi)$, there holds
\begin{align}\label{si-eq: energy change}
    R(\psi_{\diamond,2})\leq  R(\psi_{\diamond,1}),  K(\psi_{\diamond,1} ) \leq K(\psi_{\diamond,2} ) \leq  H(\bm{s}), E(\psi_{\diamond,2})\geq  E(\psi_{\diamond,1}),
\end{align}
because of
\begin{align}
    &\left\{\begin{array}{c}
        K(\psi_{\diamond,2}) + \mu_2 R(\psi_{\diamond,2}) \leq  K(\psi_{\diamond,1}) + \mu_2 R(\psi_{\diamond,1})\\
        K(\psi_{\diamond,1}) + \mu_1 R(\psi_{\diamond,1}) \leq K(\psi_{\diamond,2}) + \mu_1 R(\psi_{\diamond,2}) 
    \end{array}\right. \\
    \Rightarrow  &  \mu_1  ( R(\psi_{\diamond,1}) -  R(\psi_{\diamond,2})  )   \leq  K(\psi_{\diamond,2}) -K(\psi_{\diamond,1}) \leq \mu_2 ( R(\psi_{\diamond,1}) -  R(\psi_{\diamond,2})  ). \label{si-eq: bounded derivative}
\end{align}
Therefore, given a larger $\mu $, $R(\psi_\diamond)$ will become smaller, and $H(\bm{s}) - K(\psi_\diamond) $ is smaller, where we can define
\begin{align}\label{si-eq: asympototic function of mu}
   f_{\bm{s}} (\mu)=&  K(\psi_\diamond)-H(\bm{s}) + \mu (R(\psi_\diamond)- R_{\min}),
\end{align}
in which $ f_{\bm{s}}(\mu)$ increases monotonically as $\mu$ grows. Finally, $f_{\bm{s}}(\mu)$ will approach $0^-$, and the convergence speed of $\mu R(\psi_\diamond)$ is slower than $K(\psi_\diamond)$, \iie, 
\begin{align}
   \frac{\partial K(\psi_\diamond) }{\partial \mu}\geq  -\frac{\partial \mu R(\psi_\diamond)}{\partial \mu}\geq0.
\end{align}

Next, the condition of equivalent solution can be rewritten as 
\begin{align}
     \min_{\bm{s} \in\mathcal{S}_0 }\min_{\psi \in F^{-1}(\bm{s})} E( \psi) &\leq  \min_{\bm{s} \in\mathscr{S}/\mathcal{S}_0 }\min_{\psi  \in F^{-1} (\bm{s})}~ E( \psi),\\
     \Leftrightarrow  ~~~~~~~~~~~~~~~\min_{\bm{s} \in\mathcal{S}_0 } E( \psi_{\diamond})  &\leq  \min_{\bm{s} \in\mathscr{S}/\mathcal{S}_0 } E( \psi_{\diamond}),\\
      \Leftrightarrow  ~~~~~~ \min_{\bm{s} \in\mathcal{S}_0 }  f_{\bm{s}} (\mu) + H_0 &\leq \min_{i=1,2,\cdots} \min_{\bm{s} \in\mathcal{S}_i } f_{\bm{s}} (\mu) + H_i,
\end{align}
where a sufficient condition of $\widehat\mu$ can be given $  H_0\leq\min_{i=1,2,\cdots} \min_{\bm{s} \in\mathcal{S}_i } f_{\bm{s}} (\mu) + H_i$. Therefore, the feasible $\widehat\mu$ satisfies
\begin{align}
    -f_{\bm{s}}(\widehat\mu) \leq H_i- H_0,\quad \forall i=1,2,\cdots, \forall \bm{s} \in \mathcal{S}_i,
\end{align}
which completes the proof with the monotonicity of $ f_{\bm{s}}(\mu)$.
\end{proof}

From the proof, it follows that increasing $\mu$ drives $R(\psi)$ to smaller values and pushes $K(\psi)$ closer to the exact Ising Hamiltonian. When $\mu$ becomes sufficiently large so that the relaxed energy does not exceed the first excited-state energy, incorrect solutions can no longer occur. Hence, larger $\mu$ more easily satisfies the equivalence condition because the regularization is stronger.

The proof also shows that the critical value $\widehat{\mu}$ depends on the energy gap between the first excited state and the ground state. A larger gap allows easier separation between energy levels and permits a smaller regularization weight. Thus, relating different Ising instances to the appropriate choice of $\mu$ helps clarify the mechanism of AIM. Since the neighborhood $\mathscr{U}_{\bm{s}}$ contains a monotonic function $f_{\bm{s}}$, and because both $K(\psi_\diamond)$ and $\mu R(\psi_\diamond)$ have similar convergence rates as shown in Eq. \eqref{si-eq: bounded derivative}, the assumption captures this structural similarity.

Because the derivative of $K(\psi_\diamond)$ is bounded in Eq.~\eqref{si-eq: bounded derivative}, where the derivative of $\mu R(\psi_\diamond)$ are included, indicating they are close.
\begin{proposition}
Suppose the scaling of $K(\psi_\diamond)-H(\bm{s}), \mu R((\psi_\diamond)-R_{\min})$ has the same order in $f_{\bm{s}}(\mu)$, \iie, $H(\bm{s}) -K(\psi_\diamond) = \alpha (\mu)\times \mu R((\psi_\diamond)-R_{\min})$, where $\alpha (\mu) \geq 1,\alpha(\mu) \sim \mathcal{O}(1) $. Then the $\widehat\mu$ has a scaling of 
\begin{align}
    \widehat \mu \sim \mathcal{O} \Big(\frac{1}{H_1 - H_0}\Big).
\end{align}
\end{proposition}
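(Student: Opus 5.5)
The plan is to start from the sufficient condition obtained at the end of the preceding theorem's proof: $\widehat\mu$ is feasible as soon as $-f_{\bm{s}}(\widehat\mu)\le H_i-H_0$ for every $i\ge 1$ and every $\bm{s}\in\mathcal{S}_i$. Since $H_i-H_0\ge H_1-H_0$ for all $i\ge1$, it suffices to enforce $-f_{\bm{s}}(\widehat\mu)\le H_1-H_0$ uniformly in $\bm{s}$. The task therefore reduces to showing that $-f_{\bm{s}}(\mu)$ decays like $c/\mu$ for some instance-independent constant $c$. Inverting that decay then gives the claimed scaling $\widehat\mu\sim\mathcal{O}(1/(H_1-H_0))$.

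Next I substitute the proportionality hypothesis into $f_{\bm{s}}$ from Eq.~\eqref{si-eq: asympototic function of mu}. Write $D(\mu):=\mu(R(\psi_\diamond)-R_{\min})\ge 0$. The hypothesis states $H(\bm{s})-K(\psi_\diamond)=\alpha(\mu)D(\mu)$, so
\begin{align}
-f_{\bm{s}}(\mu)=(\alpha(\mu)-1)\,D(\mu).
\end{align}
Because $\alpha\sim\mathcal{O}(1)$, it remains to control $D(\mu)$. To do this I use the two-sided bound Eq.~\eqref{si-eq: bounded derivative}, read as a differential inequality in $\mu$. Dividing by $\mu_2-\mu_1$ and letting $\mu_2\to\mu_1$ gives
\begin{align}
\frac{\partial K(\psi_\diamond)}{\partial\mu}=-\mu\frac{\partial R(\psi_\diamond)}{\partial\mu}.
\end{align}
Combining this with the hypothesis gives a linear ODE for $g(\mu):=R(\psi_\diamond)-R_{\min}$:
\begin{align}
-\frac{\rmd}{\rmd\mu}(\alpha\mu g)=-\mu g'.
\end{align}
Freezing $\alpha$ at a constant (its $\mathcal{O}(1)$ value), this becomes $(\alpha-1)\mu g'+\alpha g=0$, whose solution is $g\propto\mu^{-\alpha/(\alpha-1)}$. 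Consequently $D\propto\mu^{-1/(\alpha-1)}$. This is the key step, and also the main obstacle. The exponent is exactly $-1$ only when $\alpha\approx2$; otherwise the exponent is a different order-one power. In that case I would argue that, for order-one $\alpha$, $D(\mu)$ still decays like an inverse power of $\mu$ up to constants, and I would state the scaling accordingly.

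As a cleaner alternative I would try a direct bound. Evaluating $E$ at $\psi_{\bm{s}}$ gives $E(\psi_\diamond)\le E(\psi_{\bm{s}})=H(\bm{s})+\mu R_{\min}$, which yields $(1-\alpha)D\le 0$. Pairing this with a local quadratic expansion of $K$ and $R$ near $\psi_{\bm{s}}$ inside $\mathscr{U}_{\bm{s}}$ should give $R(\psi_\diamond)-R_{\min}\approx\|\nabla K\|^2/(2\mu^2\lambda)$, where $\lambda$ is the curvature of $R$. That gives $D\sim c/\mu$, with $c$ depending on $\|\nabla K\|$ and $\lambda$ but not on the gap.

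Finally, I impose $(\alpha-1)c/\widehat\mu\le H_1-H_0$ uniformly over $\bm{s}$, taking $c$ as the supremum over the finitely many $\bm{s}$. This gives $\widehat\mu\ge (\alpha-1)c/(H_1-H_0)$. By the monotonicity of $f_{\bm{s}}$ established in the preceding theorem, every $\mu\ge\widehat\mu$ then also works. This gives the stated $\mathcal{O}(1/(H_1-H_0))$ scaling.
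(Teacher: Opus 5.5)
Your skeleton matches the paper's. You reduce to the sufficient condition $-f_{\bm s}(\widehat\mu)\le H_i-H_0$ from the preceding theorem, turn Eq.~\eqref{si-eq: bounded derivative} plus the proportionality hypothesis into a differential relation for $f_{\bm s}$, and take the worst case over $\bm s$ against $H_1-H_0$. Your handling of uniformity, via the supremum over the finitely many $\bm s$, is cleaner than the paper's remark that the $\bm s$-dependent coefficients ``become less significant''.

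The obstacle you flag is real, and it is exactly where the paper's own proof fails. The paper keeps only the upper half of Eq.~\eqref{si-eq: bounded derivative}, obtains $\partial_\mu f_{\bm s}\le\frac{1}{1-\alpha}\frac{f_{\bm s}}{\mu}$, and then asserts $f_{\bm s}\sim\mathcal O(-1/\mu)$. Integrating that relation gives $|f_{\bm s}|\propto\mu^{-1/(\alpha-1)}$, the same exponent you found. Moreover, a one-sided inequality of that form only bounds $|f_{\bm s}|$ from \emph{below}, which is the wrong direction for an upper estimate of $\widehat\mu$. Using both halves, as you do, gives the envelope identity $\partial_\mu f_{\bm s}=R(\psi_\diamond)-R_{\min}$. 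Hence $\partial_\mu\log|f_{\bm s}|=-1/\big((\alpha(\mu)-1)\mu\big)$, so a $1/\mu$ decay, and with it $\widehat\mu\sim 1/(H_1-H_0)$, requires $\alpha(\mu)\to2$.

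This means the gap in your proposal cannot be closed from the stated hypothesis. Your fallback sentence (``for order-one $\alpha$, $D(\mu)$ still decays like an inverse power of $\mu$ \dots'') does not deliver the claim, because for $\alpha>2$ the claim is false. Take a BRIM-type $K$ with $\nabla K(\psi_{\bm s})\neq0$ and the admissible binarization $R=\sum_i(\phi_i^2-1)^4$. To leading order, $K\approx H(\bm s)+\langle\nabla K,\delta\rangle$ and $R-R_{\min}\approx16\sum_i\delta_i^4$. The minimizer then satisfies $H(\bm s)-K(\psi_\diamond)=4\mu(R(\psi_\diamond)-R_{\min})$, i.e.\ constant $\alpha=4$. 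This gives $-f_{\bm s}\propto\mu^{-1/3}$ and $\widehat\mu\propto(H_1-H_0)^{-3}$.

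Your quadratic-expansion alternative is the right way to get the $c/\mu$ decay, but it is an additional hypothesis: nondegenerate curvature $\lambda>0$ of $R$ at $\psi_{\bm s}$. It also forces $\alpha=2$, since $H(\bm s)-K(\psi_\diamond)\approx\Vert\nabla K\Vert^2/(\mu\lambda)=2D$. The proportionality assumption thus becomes a consequence rather than an input, and your factor $\alpha-1$ is just $1$.

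The correct conclusion of your argument is $\widehat\mu\sim\mathcal O\big((H_1-H_0)^{-(\alpha-1)}\big)$ under the stated hypothesis. This reduces to $\mathcal O\big(1/(H_1-H_0)\big)$ only for $\alpha\le2$, in particular for a nondegenerate quadratic well. You should state it that way rather than present the $1/(H_1-H_0)$ law as following from $\alpha\sim\mathcal O(1)$ alone.
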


\begin{proof}
Given $\mu_2 \geq \mu_1$, there holds
\begin{align}
    f_{\bm{s}}(\mu_2) -  f_{\bm{s}}(\mu_1) = & K(\psi_{\diamond,2}) -K(\psi_{\diamond,1})+ \mu_1 (R(\psi_{\diamond,2}) -R(\psi_{\diamond,1})) + (\mu_2 -\mu_1) (R(\psi_{\diamond,2})-R_{\min})\\
    \leq & (\mu_2 -\mu_1) (R(\psi_{\diamond,1})-R_{\min}).
\end{align}
Let $\mu_1 = \mu, \mu_2 =\mu+\Delta \mu$, we have
\begin{align}
    \frac{\partial f_{\bm{s}}(\mu) }{\partial \mu}\leq&   \frac{1}{(1-\alpha(\mu))} \frac{f_{\bm{s}}(\mu)}{\mu },\\
    f_{\bm{s}}(\mu)\sim&  \mathcal{O}\Big( \frac{1}{(1-\alpha(\mu))} \frac{1}{\mu}\Big)\sim  \mathcal{O}( - \frac{1}{\mu}).
\end{align}
Additionally, $f_{\bm{s}}(\mu)$ has the same scaling for $\bm{s}$. When $\mu$ is large enough, its coefficients caused by different $\bm{s}$ become less significant, so the $H_1-H_0$ will be the worst case, which completes the proof.
\end{proof}

This proposition gives a relationship between the scope of $\widehat\mu$ and the minimum energy gap $H_1 -H_0$. In other words, the smaller the energy gap, the stronger the constraint needs to be. Otherwise, due to the loose constraints, the representation energy will have a larger variance from its true value, making it easier for the ground state to transition to the energy of an excited state. Therefore, stronger constraints make it easier to distinguish energy levels.

However, the theoretical $\widehat{\mu}$ concerns only the global minimum-energy point, assuming that the system can reach the true minimum of $E$. In reality, AIM dynamics may not be strictly decreasing, or even if they are, convergence time may be unbounded, making the theoretical assumption unattainable in practice. Therefore, it is necessary to analyze the system state during its practical convergence trajectory.

From Fig. \ref{si-fig: formulation}, it can be observed that for $R_{\mathrm{OIM}}$, $R_{\mathrm{BRIM}}$, and $R_{\mathrm{ROSC}}$, the region between the two extrema at $\pm 1$ contains an energy barrier, creating two separate basins of attraction. Because transitions during the Lyapunov process cannot increase energy, state transitions between basins become difficult. Increasing $\mu$ amplifies this restriction, further confining the system to the nearest basin determined by $R$.

As a result, although a $\mu$ has been provided that satisfies the condition for equivalent solutions, this condition is often difficult to meet in actual calculations. In practice, a $\mu$ that is not large enough cannot guarantee equivalent solutions, thus losing the theoretical guarantee of finding the global minimum. An appropriately chosen $\mu$ can make the equivalent problem sufficiently close to the Ising problem, and the output solution will be sufficiently close to the global optimum. Therefore, some correctness can be sacrificed, as an excessively large $\mu$ can cause $\mu$ to more easily fall into local minima.

\subsection{Adaptive injection algorithm design}\label{si-sec: adaptive injection design}

As mentioned above, for a time-invariant $\mu$, the practical performance of the conventional AIM always has a limitation. Therefore, a time-varying method can be designed to modify the control parameters. The time-independent case can also be included in the time-dependent case, which has a higher degree of freedom, making the system have more flexibility. Thus, if the time-invariant case is not always optimal, the time-varying case holds the potential to break the performance limitations in the Fig. \ref{fig: theoretical}c.

Therefore, the time-dependent control $ \bm{\mu}(t) := [\mu_1 (t),\cdots,\mu_n(t)]^\top,~\mu_i (t) \in \mathbb{R}$ can be designed, making dynamics become
\begin{align}\label{si-eq: control dynamics}
\frac{\rmd \phi_i}{\rmd t} = -\frac{\partial K }{ \partial \phi_i} - \mu_i (t)\frac{\partial R}{ \partial \phi_i}.
\end{align}

\begin{remark}\label{si-rmk: contrallability}
For element-independent $R(\psi)$, all dimensions are independently controllable, \iie, the transition from arbitrary state to another state is feasible, except the points with $\frac{\partial R}{ \partial \phi_i}=0$.
\end{remark}
This demonstrates the theoretical controllability of the system. If voltages and currents in physical implementation are not bounded, then as long as $\frac{\partial R}{\partial \phi_i} \neq 0$, an impulsive control signal can instantly move the system state. If only finitely many dimensions satisfy $\frac{\partial R}{\partial \phi_i} = 0$, the system may wait for natural dynamics induced by $E$ to escape, after which control becomes effective again. If the system falls into a fixed point but $\frac{\partial R}{\partial \phi_i} \neq 0$, adjusting $\bm{\mu}$ enables the system to leave that fixed point. However, if $\frac{\partial R}{\partial \phi_i} = 0$, the controller becomes ineffective at that moment.

To maintain Lyapunov stability, control over each dimension of $\bm{\mu}$ must be coordinated. Control Lyapunov functions (CLFs) are widely used in dynamical system analysis, and since Eq. \eqref{si-eq: control dynamics} resembles a CLF structure, a generalized CLF formulation can be applied to AIM \cite{Garg2020Prescribed}.

Given an energy function $E^\prime (\psi) = K(\psi) + \mu^\prime R(\psi)$ with a fixed $ \mu^\prime $, the generalized control Lyapunov function $\bm{\mu}(t)$ is the time-dependent control such that
\begin{align}\label{si-eq: generalized CLF}
  \langle  \nabla  E^\prime(\psi) ,   \dot{\psi} \rangle\leq 0,
\end{align}
in which the constraints of $\bm{\mu}$ are a half hyperplane, with $ \frac{\partial K}{\partial \psi}:= [\frac{\partial K}{\partial\phi_1},\cdots,\frac{\partial K}{\partial\phi_n}]^\top,  \frac{\partial R}{\partial \psi}:=[\frac{\partial R}{\partial\phi_1},\cdots,\frac{\partial R}{\partial\phi_n}]^\top$, such that
\begin{align}
    -( \nabla E^\prime (\psi))^\top ( \frac{\partial K}{\partial \psi} + \diag( \frac{\partial R}{\partial \psi} )\bm{\mu}) &\leq 0,\\
    \Leftrightarrow~~~~~~~~~~~~~~~~~~ -\nabla^\top  E^\prime (\psi)   \diag( \frac{\partial R}{\partial \psi} )\bm{\mu}&\leq  \nabla^\top  E^\prime (\psi)  \frac{\partial K}{\partial \psi}.\label{si-eq: half hyperplane}
\end{align}
The control pulse $\bm{\mu}$ can be expressed as
\begin{align}
    \bm{\mu} (t) = \big(\alpha (t) + \alpha^*(t)\big) \bm{\mu}_{\para}  (t) ,
\end{align}
where $\bm{\mu}_{\para} := \frac{1}{\Vert \diag( \frac{\partial R}{\partial \psi} )\nabla E^\prime (\psi) \Vert_{\ell_2}}  \diag( \frac{\partial R}{\partial \psi} )\nabla E^\prime (\psi) , \alpha^* (t):= \nabla^\top  E^\prime (\psi)  \frac{\partial K}{\partial \psi}$. Here, $\bm{\mu}_{\para}$ can be interpreted as the direction of steepest descent of the current energy, corresponding to the normal vector of a half-space, while $\alpha^\ast (t)  $ represents the minimum admissible step size. To make the injection amplitude is equal to the AIM's, \iie, $\Vert\bm{\mu}\Vert_{\ell_2} = \mu^\prime\Vert \bm{1}\Vert_{\ell_2}= \sqrt{n}\mu^\prime$, the coefficient of $\bm{\mu}_{\para}$ can be set as $\alpha (t) + \alpha^*(t)= \sqrt{n}\mu^\prime$. This direction accelerates convergence but does not alleviate the issue of falling into local minima; indeed, the CLF can be viewed as a strengthened Lyapunov system, increasing the likelihood of local-minimum trapping. Thus, additional controller design is required.

The CLF approximates the behavior of gradient-based optimizers, and momentum-based or heavy-ball methods can help escape local minima. By incorporating the asynchronous and delay analyzed in Sec. \ref{sec: control architecture design}, an asynchronous momentum optimizer can be formulated.

Given the objective function $\mathcal{J}(\theta)$, the update equation of the momentum method with one-step delay can be described as 
\begin{align} \label{si-eq: async momentum}
    \theta^{(t+1)} = \theta^{(t)} + \beta ( \theta^{(t-1)} - \theta^{(t-2)} ) - \eta \nabla \mathcal{J}(\theta^{(t-1)}),
\end{align}
where $\beta \in (0,1), \eta > 0$ is the decay ratio and learning rate.

Because AIM control is subject to delay, the controller’s output $\bm{\mu}$ is partitioned into discrete segments, making the AIM dynamics more similar to the discrete-time systems targeted by momentum methods. Since momentum algorithms possess memory, they demonstrate inherent robustness to delay. Convergence analysis can therefore be conducted in the convex setting, as analyzing non-convex convergence is generally difficult.

\begin{lemma}[Convergence of asynchronous momentum]
 If $\mathcal{J} ( \theta)$ is convex and $L$-Lipschitz continuous, $\nabla \mathcal{J} (\theta) $ is bounded $\Vert \nabla \mathcal{J} (\theta)  \Vert_{\ell_2} \leq C $,
then for then for $\beta \in [0,1), \eta \in (0, (1 - \beta )/L]$, the sequence $\{ \theta^{(t)} \}$ satisfies 
\begin{align}
\frac{1}{T}\sum_{t=1}^T  (\mathcal{J}(\theta^{(t)}) -\mathcal{J}(\theta^*))  \sim \mathcal{O} (\frac{\eta^2 L(1+\beta)}{2(1-\beta)^2}C^2)  ,
\end{align}
where $\theta^* := \arg \min \mathcal{J}(\theta)$.

\end{lemma}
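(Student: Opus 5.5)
We follow the standard heavy-ball analysis of Ghadimi et al. \cite{ghadimi2015global}, adapted to the one-step delay in Eq.~\eqref{si-eq: async momentum}. Here ``$L$-Lipschitz'' is read as $L$-smoothness of $\mathcal{J}$ (Lipschitz gradient), since that is what makes the step-size condition $\eta\le(1-\beta)/L$ meaningful. The right-hand side of the claimed bound is a constant that does not shrink with $T$. So we only need to show that the averaged suboptimality is bounded by a term of order $\eta^2 L C^2(1+\beta)/(1-\beta)^2$, plus transient terms that vanish as $T\to\infty$.

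\textbf{Step 1: change of variables.} Write $g^{(t)}:=\nabla\mathcal{J}(\theta^{(t)})$, so the update is $\theta^{(t+1)}-\theta^{(t)}=\beta(\theta^{(t-1)}-\theta^{(t-2)})-\eta g^{(t-1)}$. Define the shifted momentum $m^{(t)}:=\theta^{(t)}-\theta^{(t-1)}$. Then $m^{(t+1)}=\beta m^{(t-1)}-\eta g^{(t-1)}$, which is two interleaved heavy-ball recursions in which the gradient is evaluated at the delayed iterate. Introduce the auxiliary sequence $z^{(t)}:=\theta^{(t)}+\frac{\beta}{1-\beta}\bigl(\theta^{(t-1)}-\theta^{(t-2)}\bigr)$, or the equivalent variant matched to the lag. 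The aim is a recursion of the form $z^{(t+1)}=z^{(t)}-\frac{\eta}{1-\beta}g^{(t-1)}$, possibly up to a correction term involving $m^{(t)}-m^{(t-1)}$. This is the usual trick that turns heavy-ball into a gradient step on an averaged sequence.

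\textbf{Step 2: one-step distance bound.} Expand $\Vert z^{(t+1)}-\theta^*\Vert^2$:
\[
\Vert z^{(t+1)}-\theta^*\Vert^2=\Vert z^{(t)}-\theta^*\Vert^2-\frac{2\eta}{1-\beta}\langle g^{(t-1)},z^{(t)}-\theta^*\rangle+\frac{\eta^2}{(1-\beta)^2}\Vert g^{(t-1)}\Vert^2 .
\]
Split the inner product as
\[
\langle g^{(t-1)},\theta^{(t-1)}-\theta^*\rangle+\langle g^{(t-1)},z^{(t)}-\theta^{(t-1)}\rangle .
\]
By convexity, the first piece is at least $\mathcal{J}(\theta^{(t-1)})-\mathcal{J}(\theta^*)$. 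The second piece is a delay/momentum error. Note that $z^{(t)}-\theta^{(t-1)}$ is a combination of $m^{(t)}$ and $m^{(t-1)}$. Unrolling the recursion of Step 1 and using $\Vert g\Vert\le C$ gives $\Vert m^{(t)}\Vert\le \eta C/(1-\beta)$. Hence this error is at most $\eta C^2(1+\beta)/(1-\beta)$ in magnitude.

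To gain the extra factor $\eta L$, replace the crude Cauchy--Schwarz bound by smoothness. Write $g^{(t-1)}=g^{(t)}+(g^{(t-1)}-g^{(t)})$, so that $\Vert g^{(t-1)}-g^{(t)}\Vert\le L\Vert m^{(t)}\Vert$. The cross term is then handled with the descent inequality
\[
\mathcal{J}(\theta^{(t)})\le\mathcal{J}(\theta^{(t-1)})+\langle g^{(t-1)},m^{(t)}\rangle+\tfrac L2\Vert m^{(t)}\Vert^2 .
\]
This telescopes the leading part and leaves a residual of order $L\Vert m\Vert^2\sim L\eta^2C^2(1+\beta)/(1-\beta)^2$.

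\textbf{Step 3: sum and average.} Sum the inequality over $t=1,\dots,T$. The distance terms telescope to $\Vert z^{(1)}-\theta^*\Vert^2$, and the function-value terms from the descent inequality telescope as well. Divide by $2\eta T/(1-\beta)$. What remains is
\[
\frac1T\sum_t\bigl(\mathcal{J}(\theta^{(t)})-\mathcal{J}(\theta^*)\bigr)\le \frac{(1-\beta)\Vert z^{(1)}-\theta^*\Vert^2}{2\eta T}+\frac{\eta L(1+\beta)}{2(1-\beta)^2}\cdot\eta C^2+\frac{\eta C^2}{2(1-\beta)}+o(1).
\]
The condition $\eta\le(1-\beta)/L$ is used to absorb the $\frac{\eta^2}{(1-\beta)^2}\Vert g\Vert^2$ term into the descent gain, in the manner of Ghadimi et al.

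\textbf{Main obstacle.} The hard part is Step 2: handling the delayed-gradient cross term so that the irreducible residual is exactly the quadratic $\eta^2L$ term, and not only a first-order $\eta C^2$ term. I would first choose the lag in $z^{(t)}$ so that the first-order contributions cancel across consecutive indices. If they do not cancel exactly, I would accept an $\mathcal{O}(\eta C^2)$ term and use $\eta\le(1-\beta)/L$ to write $\eta\le \eta^2L/(1-\beta)\cdot(\ldots)$, absorbing it into the stated order. Since the claim is only an $\mathcal{O}(\cdot)$ statement, this is enough.
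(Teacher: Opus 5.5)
\emph{Same approach as the paper, but the proposal does not reach the stated order}: it leaves a first-order $\eta C^2$ term, and the auxiliary sequence it writes down is the wrong one. Handling $-\langle g^{(t-1)},m^{(t)}\rangle$ with the descent lemma is fine; the paper instead swaps to $g^{(t)}$ and uses convexity at $\theta^{(t)}$, so your swap is redundant.

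\textbf{The auxiliary sequence.} With your $z^{(t)}=\theta^{(t)}+\frac{\beta}{1-\beta}m^{(t-1)}$ the recursion is
\[
z^{(t+1)}-z^{(t)}=-\eta g^{(t-1)}+\tfrac{\beta}{1-\beta}\bigl(m^{(t)}-\beta m^{(t-1)}\bigr).
\]
The extra term pairs with $z^{(t)}-\theta^*$, whose norm is not bounded a priori, so it cannot be controlled. The choice that works is the paper's $\theta^{(t)}+p_k$, namely $z^{(t)}=\theta^{(t)}+\frac{\beta}{1-\beta}(\theta^{(t)}-\theta^{(t-2)})$. It gives exactly $z^{(t+1)}=z^{(t)}-\frac{\eta}{1-\beta}g^{(t-1)}$ and
\[
z^{(t)}-\theta^{(t-1)}=\tfrac{1}{1-\beta}m^{(t)}+\tfrac{\beta}{1-\beta}m^{(t-1)},
\]
with nonnegative weights. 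The $m^{(t)}$ piece then telescopes through your descent inequality, and the $m^{(t-1)}$ piece through convexity at $\theta^{(t-1)}$.

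\textbf{The first-order term.} Your final display keeps $\frac{\eta C^2}{2(1-\beta)}$, and your fallback for absorbing it runs the wrong way. The condition $\eta\le(1-\beta)/L$ gives $\eta^2L/(1-\beta)\le\eta$, not the reverse. So $\eta C^2/(\eta^2LC^2)=1/(\eta L)\ge 1/(1-\beta)$, which is unbounded as $\eta\to0$, and an $\mathcal{O}(\eta C^2)$ residual cannot be hidden in $\mathcal{O}(\eta^2LC^2)$.

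The term survives because your Step 2 bounds $\langle g^{(t-1)},\theta^{(t-1)}-\theta^*\rangle$ by plain convexity. That produces no negative $\Vert g^{(t-1)}\Vert^2$ term for the step-size condition to cancel against. The missing ingredient is the co-coercive inequality the paper uses. Take $\mathcal{J}(x)+\langle\nabla\mathcal{J}(x),y-x\rangle+\frac{1}{2L}\Vert\nabla\mathcal{J}(x)-\nabla\mathcal{J}(y)\Vert_{\ell_2}^2\le\mathcal{J}(y)$ with $x=\theta^{(t-1)}$, $y=\theta^*$ and $\nabla\mathcal{J}(\theta^*)=0$:
\[
\langle g^{(t-1)},\theta^{(t-1)}-\theta^*\rangle\ \ge\ \mathcal{J}(\theta^{(t-1)})-\mathcal{J}(\theta^*)+\tfrac{1}{2L}\Vert g^{(t-1)}\Vert_{\ell_2}^2 .
\]
The two gradient-norm terms then combine into $\frac{\eta}{1-\beta}\bigl(\frac{\eta}{1-\beta}-\frac{1}{L}\bigr)\Vert g^{(t-1)}\Vert_{\ell_2}^2\le0$. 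This is the only place $\eta\le(1-\beta)/L$ is needed.

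\textbf{What the fixed argument gives.} With both fixes, the only non-telescoping residual per step is
\[
\tfrac{\eta L}{(1-\beta)^2}\Vert m^{(t)}\Vert_{\ell_2}^2\le\tfrac{\eta^3LC^2}{(1-\beta)^4}.
\]
Summing and dividing by $2\eta T/(1-\beta)$ gives an $\mathcal{O}(\eta^2LC^2)$ steady-state term plus $\mathcal{O}(1/T)$ transients. The $\beta$-prefactor comes out as $\frac{1}{2(1-\beta)^3}$, the same order as the stated bound for fixed $\beta$. The $(1+\beta)$ factor you attach to $L\Vert m\Vert^2$ does not follow from $\Vert m^{(t)}\Vert_{\ell_2}\le\eta C/(1-\beta)$.
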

\begin{proof}
First, $\mathcal{J}(\theta)$ is convex and $L$-Lipschitz continuous, so we have \cite{ghadimi2015global,kim2017convergence,nesterov2004introductory}
\begin{align}\label{eq: convex condition start}
0 \leq  \mathcal{J}(y) -  \mathcal{J}(x) - \langle \nabla  \mathcal{J}(x), y - x \rangle &\leq \frac{L}{2} \Vert x - y\Vert^2_{\ell_2}, \\
 \mathcal{J}(x) + \langle \nabla  \mathcal{J}(x), y - x \rangle + \frac{1}{2L} \Vert\nabla  \mathcal{J}(x) - \nabla  \mathcal{J}(y)\Vert^2_{\ell_2} & \leq  \mathcal{J}(y),\\
 \langle \nabla \mathcal{J}(y)-\nabla  \mathcal{J}(x), y - x \rangle&\leq \frac{L}{2} \Vert x - y\Vert^2_{\ell_2} .\label{eq: convex condition end}
\end{align}
And we find the $\Vert  \theta^{(t)} -\theta^{(t-1)} \Vert_{\ell_2}$ is bounded by
\begin{align}
   \Vert  \theta^{(t)} -\theta^{(t-1)} \Vert_{\ell_2} \leq  \beta \Vert  \theta^{(t-2)} -\theta^{(t-3)} \Vert_{\ell_2} + \eta \Vert  \nabla \mathcal{J}(\theta^{(t-1)}) \Vert_{\ell_2} \leq&  \beta \Vert  \theta^{(t-2)} -\theta^{(t-3)} \Vert_{\ell_2} + \eta C\\
   \leq& \beta (\beta \Vert  \theta^{(t-4)} -\theta^{(t-5)} \Vert_{\ell_2} +\eta C)+ \eta C\\ 
   \leq & \cdots = \eta C \frac{1-\beta^\frac{t}{2}}{1-\beta}.
\end{align}

Then we can rewrite the iteration equation
\begin{align}
    \theta^{(t+1)} - \beta \theta^{(t-1)} =& \theta^{(t)} - \beta \theta^{(t-2)} - \eta \nabla \mathcal{J} (\theta^{(t-1)}),\\
    \theta^{(t+1)} + p_{k+1} =&  \theta^{(t)} + p_{k} -  \frac{\eta}{1-\beta} \nabla \mathcal{J} (\theta^{(t-1)}),\quad p_k :=  \frac{\beta}{1-\beta}( \theta^{(t)}-\theta^{(t-2)} ). \\
\end{align}

Next, we have
\begin{align}
        &\Vert   \theta^{(t+1)} + p_{k+1} - \theta^* \Vert_{\ell_2}^2 \\
        =&  \Vert   \theta^{(t)} + p_{k} - \theta^* \Vert^2_{\ell_2} +(\frac{\eta}{1-\beta})^2 \Vert \nabla \mathcal{J} (\theta^{(t-1)})  \Vert^2_{\ell_2}-\frac{2\eta}{1-\beta} \langle   \theta^{(t)} + p_{k} - \theta^* ,  \nabla \mathcal{J} (\theta^{(t-1)})  \rangle \\
        =&  \Vert   \theta^{(t)} + p_{k} - \theta^* \Vert^2_{\ell_2} +(\frac{\eta}{1-\beta})^2 \Vert \nabla \mathcal{J} (\theta^{(t-1)})  \Vert^2_{\ell_2}\\
        &-\frac{2\eta (1+\beta)}{1-\beta} \langle   \theta^{(t)} -\theta^{(t-1)} ,  \nabla \mathcal{J} (\theta^{(t-1)})  \rangle  - \frac{2\eta \beta}{1-\beta} \langle   \theta^{(t-1)} -\theta^{(t-2)} ,  \nabla \mathcal{J} (\theta^{(t-1)})  \rangle - \frac{2\eta }{1-\beta} \langle   \theta^{(t-1)} -\theta^* ,  \nabla \mathcal{J} (\theta^{(t-1)})  \rangle \\
        =& \Vert   \theta^{(t)} + p_{k} - \theta^* \Vert^2_{\ell_2} +(\frac{\eta}{1-\beta})^2 \Vert \nabla \mathcal{J} (\theta^{(t-1)})  \Vert^2_{\ell_2} +  \frac{2\eta (1+\beta)}{1-\beta} \langle   \theta^{(t)} -\theta^{(t-1)} , \nabla \mathcal{J} (\theta^{(t)}) -  \nabla \mathcal{J} (\theta^{(t-1)}) \rangle\\
        & -\frac{2\eta (1+\beta)}{1-\beta} \langle   \theta^{(t)} -\theta^{(t-1)} , \nabla \mathcal{J} (\theta^{(t)})   \rangle  - \frac{2\eta \beta}{1-\beta} \langle   \theta^{(t-1)} -\theta^{(t-2)} ,  \nabla \mathcal{J} (\theta^{(t-1)})  \rangle - \frac{2\eta }{1-\beta} \langle   \theta^{(t-1)} -\theta^* ,  \nabla \mathcal{J} (\theta^{(t-1)})  \rangle .
\end{align}

By applying Eqs.~\eqref{eq: convex condition start}-\eqref{eq: convex condition end}, the terms in above equation have upper bounds as 
\begin{align}
   \frac{2\eta (1+\beta)}{1-\beta} \langle   \theta^{(t)} -\theta^{(t-1)} , \nabla \mathcal{J} (\theta^{(t)}) -  \nabla \mathcal{J} (\theta^{(t-1)}) \rangle  \leq &   \frac{\eta L (1+\beta)}{1-\beta}  \Vert  \theta^{(t)} -\theta^{(t-1)} \Vert^2_{\ell_2} \leq \frac{\eta^3 L(1+\beta)(1-\beta^\frac{t}{2})^2}{(1-\beta)^3}C^2,\\ 
   -\frac{2\eta (1+\beta)}{1-\beta} \langle   \theta^{(t)} -\theta^{(t-1)} , \nabla \mathcal{J} (\theta^{(t)})   \rangle  \leq & -\frac{2\eta (1+\beta)}{1-\beta} ( \mathcal{J} (\theta^{(t)}) - \mathcal{J} (\theta^{(t-1)}) ),\\ 
   - \frac{2\eta \beta}{1-\beta} \langle   \theta^{(t-1)} -\theta^{(t-2)} ,  \nabla \mathcal{J} (\theta^{(t-1)})  \rangle \leq &-\frac{2\eta \beta}{1-\beta} ( \mathcal{J} (\theta^{(t-1)}) - \mathcal{J} (\theta^{(t-2)} )),\\
   - \frac{2\eta }{1-\beta} \langle   \theta^{(t-1)} -\theta^* ,  \nabla \mathcal{J} (\theta^{(t-1)})  \rangle  \leq &  - \frac{2\eta }{1-\beta}  (\mathcal{J}(\theta^{(t-1)}) -\mathcal{J}(\theta^*) + \frac{1}{2L} \Vert\nabla \mathcal{J} (\theta^{(t-1)}) \Vert^2_{\ell_2} ).   \\
\end{align}

Therefore, we have
\begin{align}
&\Vert   \theta^{(t+1)} + p_{k+1} - \theta^* \Vert_{\ell_2}^2 -\Vert   \theta^{(t)} + p_{k} - \theta^* \Vert^2_{\ell_2}+ \frac{2\eta (1+\beta)}{1-\beta} ( \mathcal{J} (\theta^{(t)}) - \mathcal{J} (\theta^{(t-1)}) ) +     \frac{2\eta \beta}{1-\beta} ( \mathcal{J} (\theta^{(t-1)}) - \mathcal{J} (\theta^{(t-2)} ) )+ \frac{2\eta }{1-\beta}  (\mathcal{J}(\theta^{(t-1)}) -\mathcal{J}(\theta^*))  \\
\leq & \frac{\eta}{1-\beta} (\frac{\eta}{1-\beta} - \frac{1}{L}) \Vert \nabla \mathcal{J} (\theta^{(t-1)})  \Vert^2_{\ell_2}  +  \frac{\eta^3 L(1+\beta)(1-\beta^\frac{t}{2})^2}{(1-\beta)^3}C^2\\
\leq &\frac{\eta^3 L(1+\beta)(1-\beta^\frac{t}{2})^2}{(1-\beta)^3}C^2 \leq \frac{\eta^3 L(1+\beta)}{(1-\beta)^3}C^2.
\end{align}

By summing the two sides of above inequality, there is 
\begin{align}
    \frac{2\eta}{1-\beta} \frac{1}{T}\sum_{t=1}^T  (\mathcal{J}(\theta^{(t)}) -\mathcal{J}(\theta^*)) \leq &\frac{1}{T} \Big( \Vert   \theta^{(2)}  - \theta^* \Vert^2_{\ell_2} + \Vert   \theta^{(1)}  - \theta^* \Vert^2_{\ell_2}+ \frac{2\eta (1+\beta)}{1-\beta} ( \mathcal{J} (\theta^{(1)}) - \mathcal{J} (\theta^* ) +       \frac{2\eta \beta}{1-\beta} ( \mathcal{J} (\theta^{(0)}) - \mathcal{J} (\theta^* )  \Big) \\
    &+ \frac{\eta^3 L(1+\beta)}{(1-\beta)^3}C^2,
\end{align}
where $\theta^{(0)} $ can be regarded as  $\theta^{(0)} =  \theta^{(1)}  $. Therefore, the scaling can be given 
\begin{align}
    \frac{1}{T}\sum_{t=1}^T  (\mathcal{J}(\theta^{(t)}) -\mathcal{J}(\theta^*))  \sim \mathcal{O} (\frac{\eta^2 L(1+\beta)}{2(1-\beta)^2}C^2)  ,
\end{align}
which completes the proof.

\end{proof}

The lemma shows that even in the presence of delay, the optimizer maintains the same order of convergence rate as in the delay-free case for strongly convex problems, with an additional steady-state error that cannot be eliminated iteratively. Although the Ising problem is non-convex, many analyses of gradient optimizers evaluate behavior on strongly convex problems to infer properties in non-convex settings. Therefore, it can be expected that the asynchronous momentum optimizer maintains robustness against delays when applied to the non-convex Ising problem. Since steady-state errors do not affect stability, the asynchronous momentum optimizer combined with the CLF provides fast and stable convergence with the potential to escape local minima.

Finally, when the oscillator period is much larger than the delay, the control can be approximated as instantaneous. In such cases, a delay-free real-time controller can be designed, suitable for low-speed scenarios.

\begin{remark}[A special case without delay]

If the controller has an ideal computational speed, capable of processing signals in real-time and outputting signals, then the dynamics of AIM with a real-time controller is a one-order non-linear dynamical system such that
\begin{align}
  -  \nabla R(\psi)\cdot  \dot{\bm{\mu}}  + \Big(\nabla^2 K(\psi)  \cdot \nabla R(\psi) +\nabla^2 R(\psi) \cdot \nabla K(\psi) -\frac{1-\beta}{\sqrt{\kappa}}\nabla K(\psi)\Big) \cdot \bm{\mu}\\
  +\nabla^2 R(\psi) \cdot \nabla R(\psi) \cdot \bm{\mu}^2- \frac{1-\beta}{\sqrt{\kappa}}\nabla K(\psi) +  \nabla^2 K(\psi) \cdot\nabla K(\psi)+ \eta \nabla E^\prime (\psi ) = 0.   
\end{align}
\end{remark}
\begin{proof}
The momentum method without delay can be approximated as a nonlinear second-order system \cite{chen2024accelerated}

\begin{align}
\ddot{\psi} + \frac{1-\beta}{\sqrt{\kappa}}\dot{\psi} + \eta \nabla E^\prime (\psi ) = 0,
\end{align}
where $\kappa\in\mathbb{R}$ denotes a small step size. Combined with 
\begin{align}
\dot{\psi} = -\nabla K(\psi) - \bm{\mu} \cdot \nabla R(\psi), 
\end{align}
we have
\begin{align}
&-\nabla^2 K(\psi) \cdot \dot{\psi}  - \dot{\bm{\mu}}\cdot   \nabla R(\psi) -  \nabla^2 R(\psi) \cdot \dot{\psi} \cdot \bm{\mu} + \frac{1-\beta}{\sqrt{\kappa}}\dot{\psi} + \eta \nabla E^\prime (\psi ) = 0\\
\Rightarrow&    -\nabla^2 K(\psi) \cdot (-\nabla K(\psi) - \bm{\mu} \cdot \nabla R(\psi))  - \dot{\bm{\mu}}\cdot   \nabla R(\psi) -  \nabla^2 R(\psi) \cdot  (-\nabla K(\psi) - \bm{\mu} \cdot \nabla R(\psi)) \cdot \bm{\mu}\\
& + \frac{1-\beta}{\sqrt{\kappa}} (-\nabla K(\psi) - \bm{\mu} \cdot \nabla R(\psi)) + \eta \nabla E^\prime(\psi ) = 0 \\
\Rightarrow&  -  \nabla R(\psi)\cdot  \dot{\bm{\mu}}  + \Big(\nabla^2 K(\psi)  \cdot \nabla R(\psi) +\nabla^2 R(\psi) \cdot \nabla K(\psi) -\frac{1-\beta}{\sqrt{\kappa}}\nabla K(\psi)\Big) \cdot \bm{\mu}+\nabla^2 R(\psi) \cdot \nabla R(\psi) \cdot \bm{\mu}^2\\
&- \frac{1-\beta}{\sqrt{\kappa}}\nabla K(\psi) +  \nabla^2 K(\psi) \cdot\nabla K(\psi)+ \eta \nabla E^\prime (\psi ) = 0.
\end{align}
\end{proof}

\section{Numerical simulation on CAIMs}

To validate the broad generality of CAIM across different AIM implementations, numerical simulations are conducted on three representative architectures: OIM, BRIM, and ROSC. In this section, multiple properties of CAIM are analyzed jointly, including the effects of injection strength, time delay, initial conditions, and noise. These factors are closely related to the solution accuracy and convergence speed of CAIM, and their investigation provides deeper insight into CAIM behavior and guidance for selecting appropriate parameter configurations in practical applications. CAIM notations are assigned as COIM, CBRIM and CROSC, correspondingly.

The benchmark problem used in the numerical experiments is the SpinModel \cite{Wang2025Parallel}, which is equivalent to a weighted fully connected MaxCut problem, with parameters $J_{ij, i<j}, h_i \in \mathcal{U} \{ 0, \pm 0.1,\pm 0.2, \cdots \pm 1.0\}$. It is worth noting that although $J_{ij}$ and $h_i$ are drawn from the same distribution, the actual coupling strength of $J_{ij}$ is twice that of $h_i$ due to the symmetry condition $J_{ij} = J_{ji}$. For each experiment, the SpinModel instance is randomly generated 20 times. In the numerical simulations, running time is normalized to phase time, and conversion to physical time depends on hardware-specific parameters. The sliding-window duration used to determine convergence in Eq. \eqref{eq: convergence detection} is set to $\varepsilon = 0.025$. Additionally, since the function $y=x$ in BRIM is unbounded, it is approximated by $x\approx 10\tanh(0.1x)$ to facilitate periodic behavior.

\subsection{Injection amplitude (binarization strength)}
\begin{figure}
    \centering
    \includegraphics[width=\textwidth]{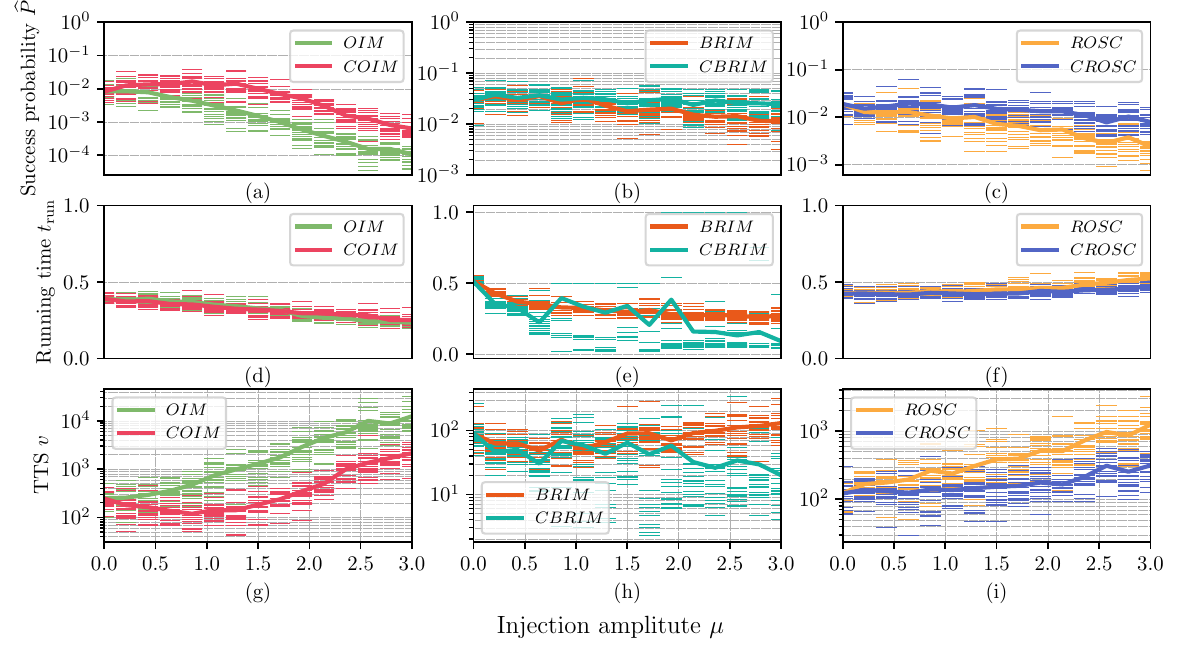}
    \caption{Comparisons on the success probability and running time between CAIM and AIM. }\label{si-fig: comparison on accuracy}
\end{figure}

The binarization strength $\mu$, also referred to as the injection amplitude, directly determines the upper bound of the solution accuracy achievable by AIM. As analyzed earlier, an excessively weak binarization strength degrades correctness, while an overly strong binarization strength reduces reachability. The numerical simulation results shown in Fig. \ref{si-fig: comparison on accuracy} corroborate this conclusion. In the simulations, $\mu$ is varied within the range $0\leq \mu \leq3$. It can be clearly observed that OIM exhibits the most pronounced behavior, where the success probability first increases and then decreases, indicating a clear upper bound. Moreover, the running time of both OIM and BRIM decreases as $\mu$ increases, which implies that under large $\mu$ the system converges prematurely. ROSC shows relatively stable behavior because the gradient of the $\tanh$ function approaches zero away from the origin, leading to slower energy variation, and therefore it is less likely to satisfy the convergence criterion in Eq. \eqref{eq: convergence detection}. In addition, it is evident that CAIM achieves significant improvements over AIM in both success probability and running time, with the success probability increasing by approximately a factor of two on average and the running time accelerating by about $10\%$ on average, while the TTS can even be improved by an order of magnitude. The improvement of CAIM in terms of TTS is particularly prominent when the binarization constraint is overly strong, further validating the performance gains introduced by CAIM.

\subsection{Time delay}
\begin{figure}
    \centering
    \includegraphics[width=\textwidth]{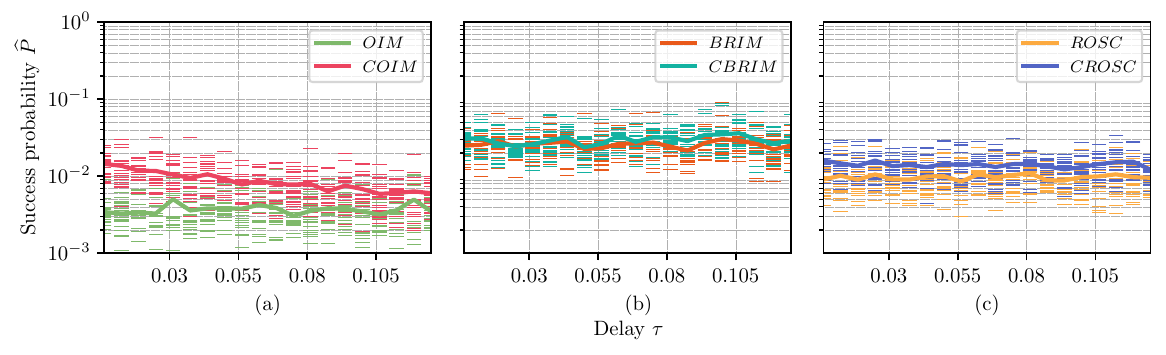}
    \caption{Effect of time delay $\tau$ on success probability of CAIMs. }\label{si-fig: time delay simulation}
\end{figure}

The time delay $\tau$ is an important factor influencing CAIM performance and is mainly determined by the computational latency of the controller or hardware delays. As discussed previously, excessive delay can lead to instability in the control system, and the numerical simulation results in Fig. \ref{si-fig: time delay simulation} verify this conclusion. Compared with AIM as a baseline, the success probability of CAIM decreases under excessively large delays, with the effect being particularly evident for COIM and CROSC. This indicates that, in hardware implementations of the controller, response speed must be carefully considered, since a mismatch with the high-speed oscillatory circuits of AIM would significantly undermine the performance gains provided by CAIM.

\subsection{Initial value}

\begin{figure}
    \centering
    \includegraphics[width=\textwidth]{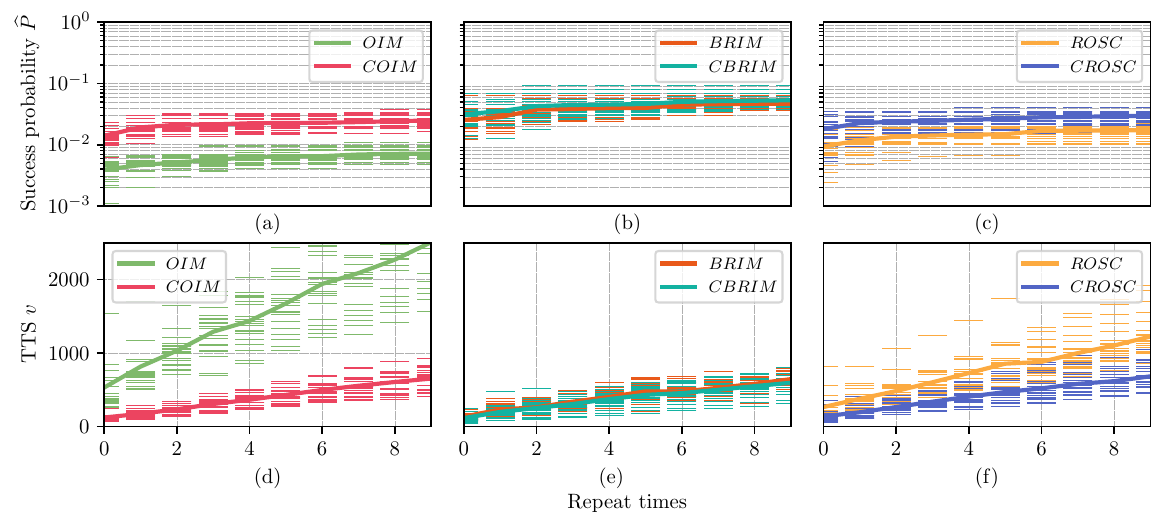}
    \caption{Performance with different initial state generation times. }\label{si-fig: intial value simulation}
\end{figure}

The initial state of AIM serves as the starting point of its dynamical evolution and typically determines which local minimum basin the system eventually settles into. In practice, Ising machines, including QA, CIM, DIM, AIM, and simulated annealing algorithms, usually attempt multiple runs with different initial conditions and select the best result when solving a given Ising problem. The numerical simulations in Fig. \ref{si-fig: intial value simulation} also incorporate this consideration, where each run corresponds to a random initialization for the same problem, repeated multiple times to obtain the optimal solution. The simulation results show that both AIM and CAIM benefit from multiple initializations, leading to increased success probabilities; however, although the success probability improves, the running time also increases proportionally. For the TTS metric, which accounts for both success probability and running time, the time cost of COIM and CROSC appears relatively high. Overall, CAIM still maintains an advantage in success probability, and the specific number of repetitions can be chosen according to accuracy-oriented or efficiency-oriented application scenarios.

It is worth noting that, in the numerical experiments, the random initial states are chosen as $\phi_i \sim \mathcal{U}(0,2\pi)$ for OIM, $\phi_i \sim \mathcal{U}(-10,10)$ for BRIM, and $\phi_i \sim \mathcal{U}(-1,1)$ for ROSC. Although truly random initial states are often difficult to realize precisely in circuit hardware implementations of AIM, CAIM can theoretically realize any equivalent random initial state, since this is equivalent to placing the system state at a specified configuration, which can be achieved via impulse functions as described in Remark \ref{si-rmk: contrallability}. Therefore, if the controller is digital, it can randomly generate a set of state values and then stabilize the system at the target state by adjusting the control pulses $\bm{\mu}(t)$ before starting the solving process, thereby equivalently realizing random initialization.

\subsection{Noise}

\begin{figure}
    \centering
    \includegraphics[width=\textwidth]{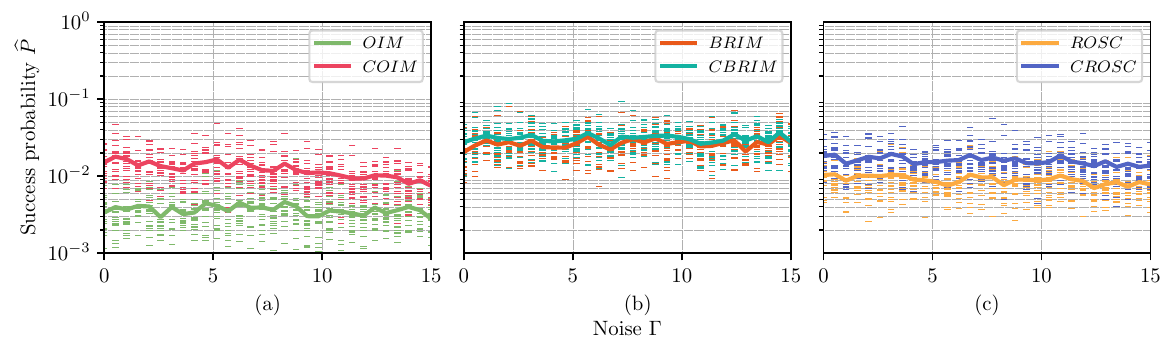}
    \caption{Performance variation with noises. }\label{si-fig: noise simulation}
\end{figure}

In this experiment, noise is introduced into the dynamical evolution of the system. It is assumed that the noise is isotropic across all oscillators, meaning that each oscillator experiences noise of identical strength, with $\Gamma$ determining the noise intensity, and the noise is assumed to be Gaussian white noise $\xi_i\sim \mathcal{N}(0,1^2)$. Accordingly, the dynamics can be written as 
\begin{align}
 \rmd \phi_i = \Big( -\frac{\partial K }{ \partial \phi_i} - \mu_i (t)\frac{\partial R}{ \partial \phi_i}~ \Big) \rmd  t + \Gamma \xi_i, \quad    \xi_i\sim \mathcal{N}(0,1^2).
\end{align}
The simulation results shown in Fig. \ref{si-fig: noise simulation} indicate that CBRIM is insensitive to noise, while CROSC and ROSC exhibit similar trends with a slight degradation, and COIM shows a pronounced decline in success probability. This behavior is mainly attributed to the fact that both BRIM and ROSC employ $\tanh$-shaped functions, which have small gradients near the periodic boundaries and thus suppress oscillations, rendering them less sensitive to noise. In contrast, the cosine function used in COIM sustains oscillations and preserves the influence of noise, while the momentum-based optimizer further amplifies noise effects, resulting in a noticeable reduction in success probability.

\end{document}